\documentclass[11pt]{article}

\usepackage[final]{acl}

\usepackage{times}
\usepackage{latexsym}

\usepackage[T1]{fontenc}

\usepackage[utf8]{inputenc}

\usepackage{microtype}

\usepackage{inconsolata}

\usepackage{graphicx}

\usepackage{hyperref}
\usepackage{hyperxmp}
\usepackage{listings}
\usepackage{amsmath}
\usepackage{amsfonts}
\usepackage{diagbox}
\usepackage{algorithmic}
\usepackage[ruled,vlined,linesnumbered,resetcount]{algorithm2e}
\usepackage{tabularx}
\usepackage{subcaption}
\usepackage{natbib}
\usepackage{multirow}
\usepackage{tikz}
\usepackage{pgfplots}
\usepackage{graphicx}
\usetikzlibrary{patterns}
\usepackage{enumitem}

\usepackage{balance}
\usepackage{booktabs}
\usepackage{amsthm}
\usepackage{comment}

\newtheorem{definition}{Definition}
\newcommand{\modelname}{RCTEA}

\title{\modelname{}: Richness-guided Co-training for Temporal Entity Alignment}

\author{
  Jiayun Li$^1$\quad Wen Hua$^{2}$\thanks{Corresponding author.}\quad Shiqi Fan$^2$\quad Fengmei Jin$^2$\quad Haiyang Jiang$^1$\quad Xue Li$^1$ \\[4pt]
  $^1$The University of Queensland, Brisbane, Queensland, Australia \\
  $^2$The Hong Kong Polytechnic University, Hong Kong SAR, China \\[4pt]
  \texttt{\{uqjli48, haiyang.jiang\}@uq.edu.au}\\
  \texttt{\{wency.hua, shiqi.fan, fengmei.jin\}@polyu.edu.hk}\\
  \texttt{xueli@eecs.uq.edu.au}
}

\usepackage{bibentry}

\begin{document}

\maketitle

\begin{abstract}
Temporal Entity Alignment (TEA), which aims to identify equivalent entities across Temporal Knowledge Graphs (TKGs), is crucial for integrating knowledge facts from multiple sources. However, existing TEA models often fail to capture the orthogonal yet complementary effect between structural and temporal features, and typically overlook the importance of information richness—a key factor for effective message passing in the neural feature encoders. To address these limitations, we propose a \modelname{} framework that jointly models both structural and temporal aspects of the TKGs for entity alignment. Specifically, we design a richness-guided attention mechanism along with an adaptive weighting strategy to facilitate effective feature fusion. To ensure robust alignment despite noisy entity contexts, we introduce a dual-view neighborhood consensus algorithm that jointly refines the feature encoders to enforce local structural consistency of the predicted alignments. Extensive experiments demonstrate the superiority of \modelname{}, achieving state-of-the-art performance on public TEA benchmarks. 
\end{abstract}


\section{Introduction}
A Knowledge Graph (KG) is a structured knowledge base that captures real-world knowledge to support data-driven applications such as information retrieval \cite{kobayashi2000information}, information extraction \cite{sarawagi2008information, wang2017pdd}, domain adaption\cite{xiong2021source} and recommendation \cite{jiang2024challenging, lu2012recommender, zhao2025diversity}. With the growing use of KGs derived from diverse sources, integrating knowledge from multiple KGs has become crucial. Entity Alignment (EA), the process of identifying equivalent entities across KGs, plays a central role in KG fusion. 

Given the dynamic and complex nature of entity-wise interactions, the incorporation of temporal information into KGs has led to the emergence of Temporal Knowledge Graphs (TKGs) \cite{trisedya2019entity, zeng2020degree, ge2021make, xin2022ensemble, liu2023dependency}. TKGs extend traditional triples by including timestamps, enabling a richer representation of dynamic relationships among entities over time. However, this advancement also introduces the challenging task of Temporal Entity Alignment (TEA), as illustrated in Figure \ref{fig:alignment}. 
Existing TEA models \cite{xu2021time, xu2022time, liu2023unsupervised, cai2022simple, cai2023effective} often treat temporal features in a simplified way, lacking dedicated mechanisms for temporal refinement and noise mitigation. Specifically, \cite{xu2021time, xu2022time} treat temporal information as a type of relation for feature propagation and rely on straightforward feature concatenation during inference. In contrast, \cite{liu2023unsupervised, cai2022simple, cai2023effective} employ temporal encoders to facilitate temporal knowledge transfer, but they are often computationally expensive and sensitive to temporal heterogeneity. Although \cite{li2025htea} introduces a temporal attention mechanism and a heterogeneity refinement strategy to enhance temporal knowledge propagation, it does not explicitly address the challenges of complex feature interactions.

\begin{figure}
     \centering
     \includegraphics[width=\linewidth]
     {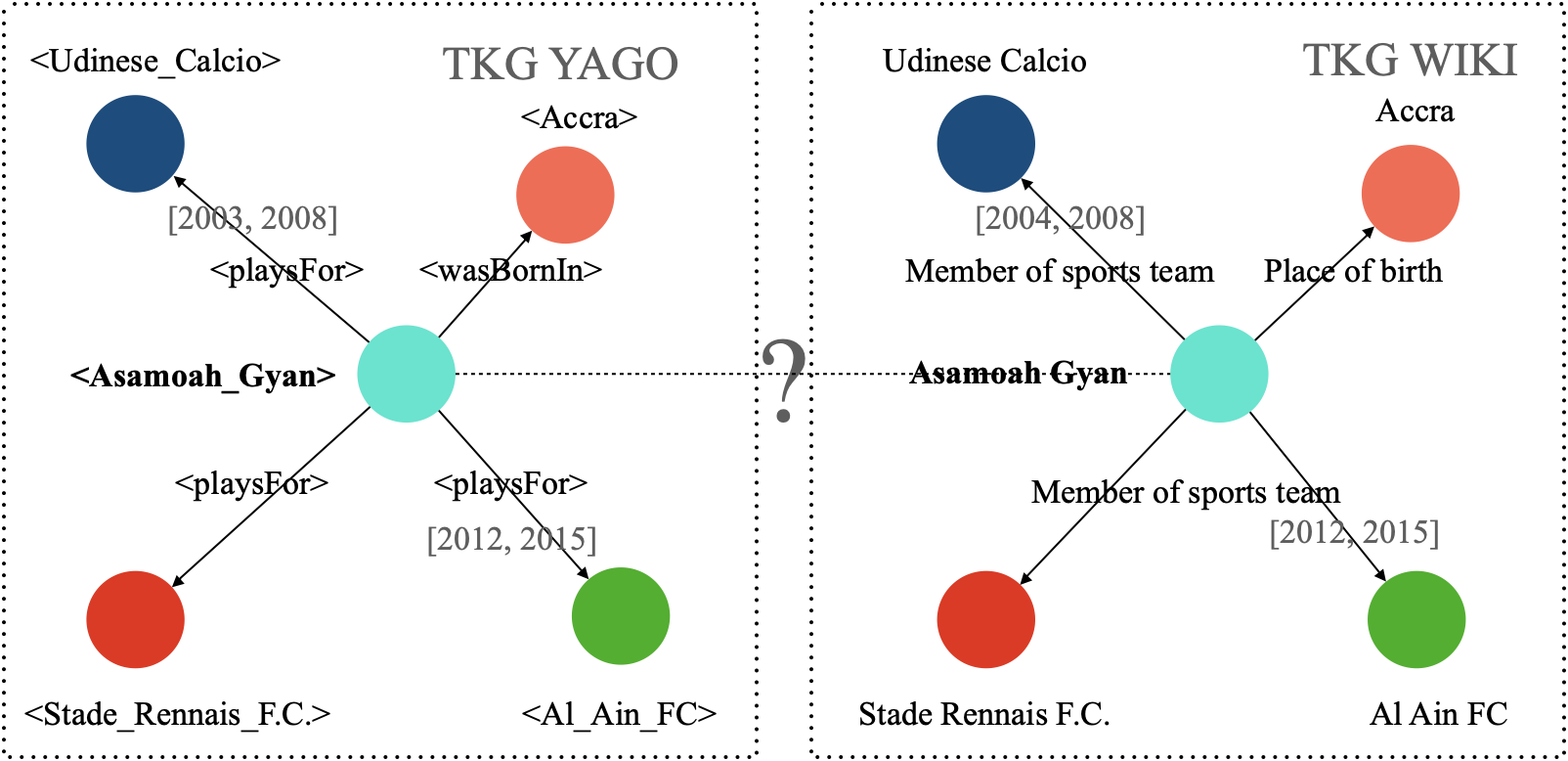}
     \caption{Example of temporal entity alignment.}
     \vspace{-8mm}
     \label{fig:alignment}
\end{figure}

\textbf{Issue 1: Feature richness-agnostic message passing.}
Existing TEA models are typically GNN-based, relying on message propagation between 
neighboring entities to refine representation learning. However, they often overlook the 
importance of feature richness in the local neighborhood when learning entity embeddings. 
To verify this, we conduct a preliminary study on the YAGO-WIKI180K dataset, grouping 
entities based on the richness levels of their core features — structural (number of 
neighboring entities or relation types) and temporal (number of temporal points/intervals) 
— and evaluate alignment accuracy (Hit@1) per group. As illustrated in 
Table~\ref{tab:richness}, entities with lower feature richness tend to generate less 
reliable alignments. Propagating their supervision signals to neighboring entities 
naturally hinders the model's ability to capture precise and informative features, 
resulting in suboptimal representations and inferior alignment results. This calls for 
a mechanism that distinguishes entities based on feature richness during the 
message-passing process.

\begin{table}[ht]
\caption{Impact of feature richness (Hit@1).}
 \vspace{-4mm}
\label{tab:richness}
\centering
\scriptsize
\begin{tabular}{c c c c}
\toprule
    \textbf{Feature} & \textbf{Low-richness} & \textbf{Medium-richness} & \textbf{High-richness} \\
    \midrule
    Entity &.0424 &.2109 &.2730 \\
    Relation &.0005 &.0009 &.0022 \\
    Temporal &.0026 &.0354 &.4675 \\
    \bottomrule
\end{tabular}
\end{table}

\begin{table}[ht]
\caption{Uniquely correct cases of feature encoders: Hop-1/2 sizes, temporal values indicate the number of hop-1/2 neighbors and different temporal values.}
\vspace{-4mm}
\label{tab:unique correct}
\centering
\scriptsize
\begin{tabular}{c c c c c}
\toprule
    \textbf{Feature} & \textbf{\# Correct} &\textbf{Hop-1 Size} & \textbf{Hop-2 Size} & \textbf{\# Temp} \\
    \midrule
    Structural &9525 &5 &758 &1 \\
    Temporal &1058 &5 &533 &3 \\
    Mixed &9464 &4.5 &533.5 &1 \\
    \bottomrule
\end{tabular}
\end{table}

\textbf{Issue 2: Static feature fusion without entity-specific weighting.}
Most TEA models combine structural and temporal features using basic methods such as 
concatenation or averaging, which limits their ability to adaptively emphasize the most 
informative signals for each entity pair. Structural and temporal features provide 
orthogonal but complementary insights for entity alignment, and their relative importance 
may vary depending on the specific entity context. To investigate this, we conduct a 
further preliminary study on the YAGO-WIKI180K dataset, comparing three feature encoders: 
structural, temporal, and mixed (simple concatenation of both). 
Table~\ref{tab:unique correct} reports the number of entity pairs that each encoder 
aligns correctly on its own while the others fail, where Hop-1/2 Size denotes the average 
number of entities in the hop-1/2 neighborhood reflecting structural richness, and \#Temp 
represents the average number of temporal values associated with the central entity 
indicating temporal richness. The results confirm a clear correlation between feature 
richness and model performance: the structural encoder is more effective for entities with 
rich neighborhoods, while the temporal encoder excels for temporally-diverse entities. 
This highlights the need for entity-specific, richness-aware weighting to integrate both 
feature modalities into the alignment decision.

To address these issues, we propose \modelname{}, a novel \underline{R}ichness-guided \underline{C}o-training framework for \underline{T}emporal \underline{E}ntity \underline{A}lignment. In particular, we design a novel richness-guided attention mechanism to facilitate effective message propagation in the feature encoders by quantifying the importance of each neighbor based on its structural and temporal richness. We also introduce a dynamic weighting strategy to adaptively integrate both structural and temporal features for embedding learning. To enhance model robustness, we propose a dual-view neighborhood consensus algorithm that jointly refines the feature encoders by enforcing local consistency in the predicted alignments. Our model can be further extended to a semi-supervised iterative training framework, where the training signals are progressively expanded with an effective bi-directional seed selection mechanism to eliminate noisy pseudo-labels. Our main contributions are summarized below:
\begin{itemize}[leftmargin=*]
\item We introduce novel dual-aspect feature encoders that fully leverage the orthogonal yet complementary features-structural and temporal-to learn more informative and comprehensive entity embeddings. The feature encoders are jointly refined via dual-view neighborhood consensus.
\item We highlight the importance of richness for feature encoders, and design richness-guided attention and adaptive weighting strategies to enhance the effectiveness of representation learning by dynamically prioritizing informative features.
\item Extensive experiments on publicly available TEA datasets clearly demonstrate the superiority of our \modelname{} model, achieving state-of-the-art alignment performance.
\end{itemize}

\section{Related Work}

Recent studies have recognized the importance of temporal information for entity alignment. Models like TEA-GNN \cite{xu2021time} and TREA \cite{xu2022time} treat temporal point embeddings as a medium for temporal representation, yet they tend to treat temporal information similarly to relational data, thus overlooking its unique properties. The following methods explore temporal information as an attribute: STEA \cite{cai2022simple} encodes temporal information using a temporal dictionary, while DualMatch \cite{liu2023unsupervised} incorporates a temporal encoder to enhance message propagation. MGTEA \cite{zeng2024benchmarking} further adopts a similar strategy to exploit fine‑grained temporal cues. However, these approaches are often computationally expensive, largely due to the need to construct and process temporal‑aspect similarity matrices. LightTEA \cite{cai2023effective}, an extension of LightEA \cite{mao2022lightea}, introduces a lightweight model for TKG alignment, but the temporal module leads to limited improvement on existing datasets \cite{xu2021time}. HTEA \cite{li2025htea} introduces an L1-weighted feature selection matrix and a temporal attention module to enhance the propagation of temporal information. Furthermore, a heterogeneity module is applied iteratively to alleviate the impact of heterogeneous temporal triples. Nevertheless, current TEA models still fail to effectively capture the interaction between structural and temporal features, and suffer from the noise issue on both sides. 
\begin{figure*}[htb]
    \centering
    \includegraphics[width=\linewidth]{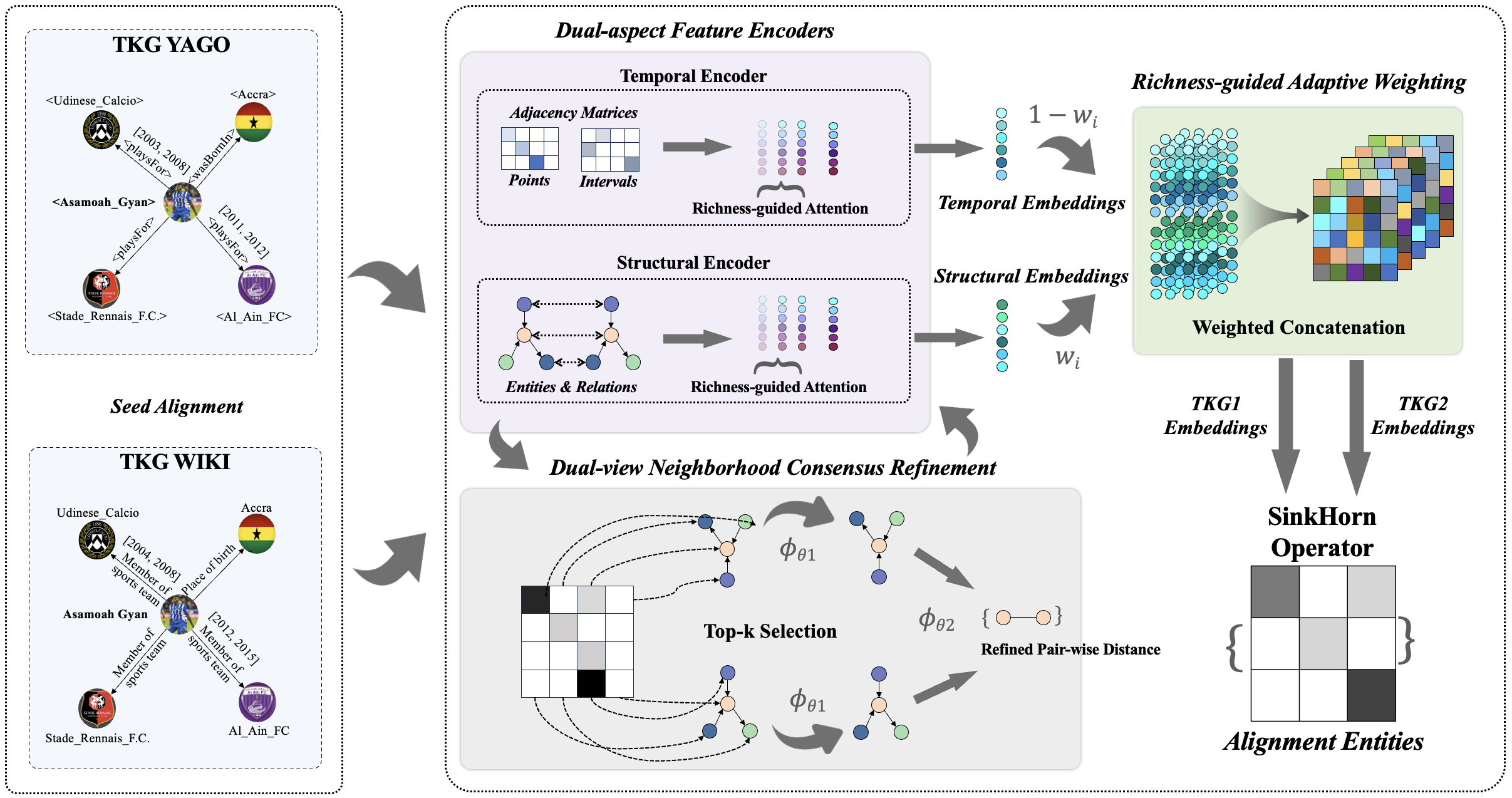}
    \vspace{-4mm}
    \caption{\modelname{} framework overview.}
    \vspace{-4mm}
    \label{fig:framework}
\end{figure*}

\section{Problem Formulation}
\begin{definition}[Temporal Knowledge Graph]
A temporal knowledge graph (TKG) is a directed graph, denoted by $\mathcal{G} = (\mathcal{E}, \mathcal{R}, \mathcal{T}, \mathcal{F})$, where $\mathcal{E}, \mathcal{R}$ and $\mathcal{T}$ represent the set of entities, relations and temporal points, respectively; $\mathcal{F} \subseteq \mathcal{E} \times \mathcal{R} \times \mathcal{E} \times \mathcal{T} \times \mathcal{T}$ is a collection of facts in the form of $f=(e_h, r, e_t, I)$. Specifically, $I=[t_s, t_e]$ is the temporal interval during which the head entity $e_h$ has a valid relation $r$ with the tail entity $e_t$, where $t_s \in\mathcal{T}$ (resp. $t_e \in\mathcal{T}$) represents the start time (resp. end time) of the fact. A placeholder ($\sim$) is used when $t_s$ or $t_e$ does not exist. 
$\mathcal{I} \subseteq \mathcal{T} \times \mathcal{T}$ denotes the set of temporal intervals.
\end{definition}
\vspace{-2mm}

\begin{definition}[Temporal Entity Alignment] Given two TKGs $\mathcal{G} = (\mathcal{E}, \mathcal{R}, \mathcal{T}, \mathcal{F})$ and $\mathcal{G}^\prime = (\mathcal{E}^\prime, \mathcal{R}^\prime, \mathcal{T}^\prime, \mathcal{F}^\prime)$ and a set of pre-aligned entity pairs $\mathcal{S} = \{(e, e^\prime)\ | e \equiv e^\prime, e \in \mathcal{E}, e^\prime \in \mathcal{E}^\prime\}$ as the seed alignment, where $\equiv$ denotes entity equivalence, \textit{temporal entity alignment} (TEA) aims to identify all possible newly matched entity pairs: $\{(e, e^\prime)\ |\ e \equiv e^\prime, e \in \mathcal{E}, e^\prime \in \mathcal{E}^\prime, (e, e^\prime)\notin \mathcal{S}\}$.
\end{definition}
\section{The \modelname{} Model}
Figure \ref{fig:framework} presents an overview of our proposed \modelname{} framework, which consists of three key modules: (1) Dual-aspect feature encoders that learns entity embeddings from both structural and temporal perspectives, along with a richness-guided attention mechanism to enhance the reliability of message propagation in the GNNs; (2) A richness-guided adaptive weighting strategy that balances both features with dynamically determined weights to generate a unified entity representation for alignment; (3) A denoising algorithm based on dual-view neighborhood consensus to refine the feature encoders by capturing locality-aware alignment patterns.

\subsection{Dual-aspect Feature Encoders}
As discussed earlier, the structural and temporal cues of a TKG provide orthogonal yet complementary insights for the alignment decision, making it necessary to build \textit{dual-aspect feature encoders} that learn both the separate feature representations and their joint effect. In particular, we decompose the structural and temporal perspectives of a TKG by introducing two neural models based on the Graph Attention Network (GAT) architecture for feature encoding. These include: a structural encoder $\mathcal{M}_{stru}$ that utilizes entity- and relation-level input to learn the structural and semantic representations of a TKG, and a temporal encoder $\mathcal{M}_{temp}$ that incorporates both temporal points and intervals to encode the temporal behaviors of entities. Each encoder is designed to receive and propagate distinct types of information from the TKG. Beyond separate feature learning, we also construct a mixed encoder $\mathcal{M}_{mix}$ that jointly encodes all types of features and balances them through an adaptively weighted combination.

\subsubsection{Structural and Temporal Encoders}
Given a TKG $\mathcal{G} = (\mathcal{E}, \mathcal{R}, \mathcal{T}, \mathcal{F})$, we extract four types of features (i.e., entities $\mathcal{E}$, relations $\mathcal{R}$, temporal points $\mathcal{T}$ and temporal intervals $\mathcal{I}$) and generate type-specific bipartite adjacency matrices $\mathbf{A}_{(\mathcal{C})}$ from the TKG, where $\mathcal{C}\in\{E, R, T, I\}$. $\mathbf{A}_{(\mathcal{C})}$ indicates the importance of each feature $c$ for learning the embedding of entity $e_i$. We utilize log-normalized initial representation to dampen the impact of high-frequency features \cite{li2025htea}:

\begin{equation}
a_{ic} = \frac{\left(\log(|\mathcal{F}_{ic}| + 1)\right)}{\sum_{c^\prime \in \mathcal{C}_{e_i}}\left(\log(|\mathcal{F}_{ic^\prime}| + 1)\right)}
\label{eq:bipartite_weight}
\end{equation}
$\mathcal{C}_{e_i}$ denotes the set of features contained by entity $e_i$; $\mathcal{F}_{ic}$ is the set of facts involving both entity $e_i$ and feature $c$. For example, when $\mathcal{C}=E$, $\mathcal{C}_{e_i}=\mathcal{N}_{e_i}$ represent all the neighboring entities that $e_i$ connects to. Similarly, when $\mathcal{C}=I$, $\mathcal{C}_{e_i}=\mathcal{I}_{e_i}$ indicates all the temporal intervals that $e_i$ exists a relation with other entities.

Both the structural and temporal encoders learn the type-specific entity embeddings layer-by-layer in a GAT-like manner. Specifically, the type-specific layer propagation is defined as follows:
\begin{align}
h_{e_{i(\mathcal{C})}}^{0} &= \left(\mathbf{A}_{(\mathcal{C})}[e_i]\right)^{\top} \mathbf{F}_{(\mathcal{C})}
\end{align}
\begin{align}
h_{e_{i}(\mathcal{C})}^{l+1} = \sigma(\sum_{e_j \in \mathcal{N}_{e_i}} \beta_{ij(\mathcal{C})}^l \mathbf{W}_{(\mathcal{C})}^lh_{e_{j}(\mathcal{C})}^l), l\geq0
\label{eq:feature_propogation}
\end{align}

Here, $\mathbf{F}{(\mathcal{C})}$ denotes the type-specific initialization matrix, while $\mathbf{A}{(\mathcal{C})}[e_i]$ represents the bipartite feature selection weights for entity $e_i$ with respect to type $\mathcal{C}$, as defined in Eq.~\ref{eq:bipartite_weight}. To enable message passing across neighboring entities, multi-layer feature propagation is performed as follows: $\beta_{ij}^l$ denotes the adaptive attention weight between entities $e_i$ and $e_j$ at layer $l$ (detailed in the next section), $\boldsymbol{W}_{(\mathcal{C})}^l$ is the learnable type-specific transformation matrix, and $\sigma(\cdot)$ is the ReLU activation function.

After $L$-hop feature propagation and aggregation, we concatenate the embeddings learned at each layer to obtain the final type-specific feature representation for entity $e_i$ as outlined below:
\begin{align}
    h_{e_i(\mathcal{C})} = [h_{e_i(\mathcal{C})}^{0}||h_{e_i(\mathcal{C})}^{1}||\cdots || h_{e_i(\mathcal{C})}^{L}]
\end{align}

The structural encoder $\mathcal{M}_{stru}$ concatenates the entity- and relation-specific embeddings to obtain the structural representation of entity $e_i$. Similarly, the temporal encoder $\mathcal{M}_{temp}$ combines the embeddings of both temporal points and temporal intervals, as below:
\begin{equation}
    h_{e_i(stru)} =[h_{e_i(E)}||h_{e_i(R)}]
\end{equation}
\begin{equation}
    h_{e_i(temp)} =[h_{e_i(T)}||h_{e_i(I)}]
\end{equation}

\subsubsection{Mixed Encoder}
Considering the complementary effect between structural and temporal perspectives of a TKG, we further build a mixed encoder $\mathcal{M}_{mix}$ to jointly obtain the dual-view feature representation of entity $e_i$ through a weighted concatenation of both features obtained by $\mathcal{M}_{stru}$ and $\mathcal{M}_{temp}$ respectively:
\begin{equation}
h_{e_i} = [w_ih_{e_i(stru)}\oplus (1-w_i)h_{e_i(temp)}]
\label{eq:weighted_combine}
\end{equation}

All encoders are trained with the same procedure. Given two TKGs $\mathcal{G} = (\mathcal{E}, \mathcal{R}, \mathcal{T}, \mathcal{F})$ and $\mathcal{G}^\prime = (\mathcal{E}^\prime, \mathcal{R}^\prime, \mathcal{T}^\prime, \mathcal{F}^\prime)$, along with their seed alignments $\mathcal{S} = \{(e_i, e_i^\prime)\ | e_i \equiv e_i^\prime, e_i \in \mathcal{E}, e_i^\prime \in \mathcal{E}^\prime\}$, we formulate a bi-directional probability-based loss over entity pairs, where both directions of alignment (i.e., from one TKG to the other and vice versa) are considered symmetrically to enhance robustness and alignment quality \cite{chen2023meaformer}.
\begin{equation}
p(e_i, e_i^\prime) = \frac{\gamma(e_i, e_i^\prime)}{\gamma(e_i, e_i^\prime) + \sum_{e_j \in \mathcal{N}^{ng}}\gamma(e_i, e_j)}
\end{equation}
where $\gamma(e_i, e_j) = exp(h_{e_i} \cdot h_{e_j}^T / \tau)$ and $\tau$ is the temperature hyperparameter. $\mathcal{N}^{ng}$ is the in-batch negative samples to enhance training efficiency. To consider both directions of entity alignment, the overall loss function is defined as follows:
\begin{equation}
\mathcal{L}_{align} = -log(p(e_i, e_i^\prime) + p(e_i^\prime, e_i))/2
\end{equation}

\subsection{Richness-guided Attention and Adaptive Weighting}
Existing neighborhood weighting methods usually rely on naive GAT mechanisms \cite{velickovic2018graph} or relation-based approaches \cite{mao2020relational}, which tend to be less interpretable and fail to capture distinctive feature contributions. Meanwhile, current TEA models \cite{xu2021time, xu2022time} typically incorporate various feature embeddings through a simple concatenation and feed the combined representations into alignment modules to predict entity mappings. However, our preliminary findings show that the richness of structural and temporal information varies significantly across entities, which should influence the weighting strategy during alignment. 

To capture this signal, we propose a \textit{richness-guided attention mechanism} along with an \textit{entity-wise adaptive weighting scheme} that balances the contributions of structural and temporal features. Recognizing that structural and temporal information contribute differently to each entity's representation, our approach adaptively prioritizes the more informative aspect for each entity, thereby enhancing alignment accuracy. Specifically, we introduce the concept of reference embedding and define a corresponding richness measurement, which quantifies the feature richness of each entity. These measurements form the foundation for both the richness-guided attention and adaptive weighting mechanisms used in our model.

\subsubsection{Reference Embeddings}
When counterpart entities exhibit similar structural or temporal characteristics (such as node degrees, neighborhood distributions or temporal behaviors), their embeddings can reliably capture underlying similarities. To leverage this, we propose an effective strategy for measuring structural and temporal richness. Specifically, we introduce a global reference entity $e_{ref}$ to simulate the entity with extremely low richness, e.g., without any neighbors or temporal behaviors. We obtain its corresponding type-specific representations, called reference embeddings $h_{e_{ref}(\mathcal{C})}$, similarly as the other entities using $\mathcal{M}_{stru}$ and $\mathcal{M}_{temp}$. These reference embeddings serve as representative proxy vectors that reflect the typical semantics of entities with low richness in the corresponding feature space. For example, the structural reference embeddings $h_{e_{ref}(\mathcal{C})}$ where $\mathcal{C} \in \{E,R\}$ quantify the degree-based richness of entity $e_i$, using the number of facts it participates in as the  count between $e_i$ and the reference entity $e_{ref}$ (Eq.\ref {eq:fact}). The reference entity's temporal aspect embeddings are none-temporal embeddings (point and interval), the correlation weight for entity, reference temporal embedding is calculated via (Eq.\ref{eq:bipartite_weight}) analogously.
\begin{align}
    |F_{ie_{ref}(E)}|, |F_{ie_{ref}(R)}| = \sum_{e_j \in \mathcal{N}_{e_i}}|F_{ie_j}|
\label{eq:fact}
\end{align}

\subsubsection{Richness Measurement}
We introduce a new metric, called \textit{reference similarity}, to quantify the feature richness of a specific entity $e_i$. Since the reference entity $e_{ref}$ is designed to simulate an entity prototype with extremely low richness, a low embedding similarity between $e_i$ and $e_{ref}$ naturally indicates that entity $e_i$ is structurally or temporally rich. To capture this signal, we estimate $e_i$'s feature richness by calculating its feature-specific reference similarity with regard to the corresponding reference embedding. Formally,
\begin{align}
\mathcal{S}_{e_i(\mathcal{C})} = \cos(h_{e_{i}(\mathcal{C})}, h_{e_{ref}(\mathcal{C})})
\label{eq:reference_distance}
\end{align}

\subsubsection{Richness-guided Feature Propagation}
During the multi-layer message propagation of our dual-aspect feature encoders (i.e., Eq. \ref{eq:feature_propogation}), we introduce a richness-guided attention weight $\beta_{ij(\mathcal{C})}^l$ to quantify the importance of each neighbor $e_j$ for learning the representation of entity $e_i$, as illustrate in Figure \ref{fig:attention weight}. The underlying intuition is that feature-specific richness varies across neighbors, while it is more valuable to propagate information from structurally or temporally rich neighbors. For example, our preliminary study (Table \ref{tab:richness}) proves that entities with higher degree are more likely to be correctly aligned, indicating that the training signals provided by these structurally rich neighbors are more reliable for learning robust entity embeddings. Similarly, a neighbor with diverse range of temporal features tends to carry more distinctive information for determining the entity alignment. Based on this observation, we formally define the richness-guided attention weight for each layer's feature propagation as follows:
\begin{align}
    \beta_{ij(\mathcal{C})}^l = \frac{\exp\left(-\mathcal{S}^l_{e_j(\mathcal{C})}\right)}{\sum_{e_k \in \mathcal{N}_{e_i}}\exp\left(-\mathcal{S}^l_{e_k(\mathcal{C})}\right)}, l\geq0
\end{align}
Here, $\mathcal{S}^l_{e_j(\mathcal{C})}$ at each feature propagation layer $l$ denotes the feature-specific reference distance obtained by Eq. \ref{eq:reference_distance}.

\begin{figure}
     \centering
     \includegraphics[width=\linewidth]
     {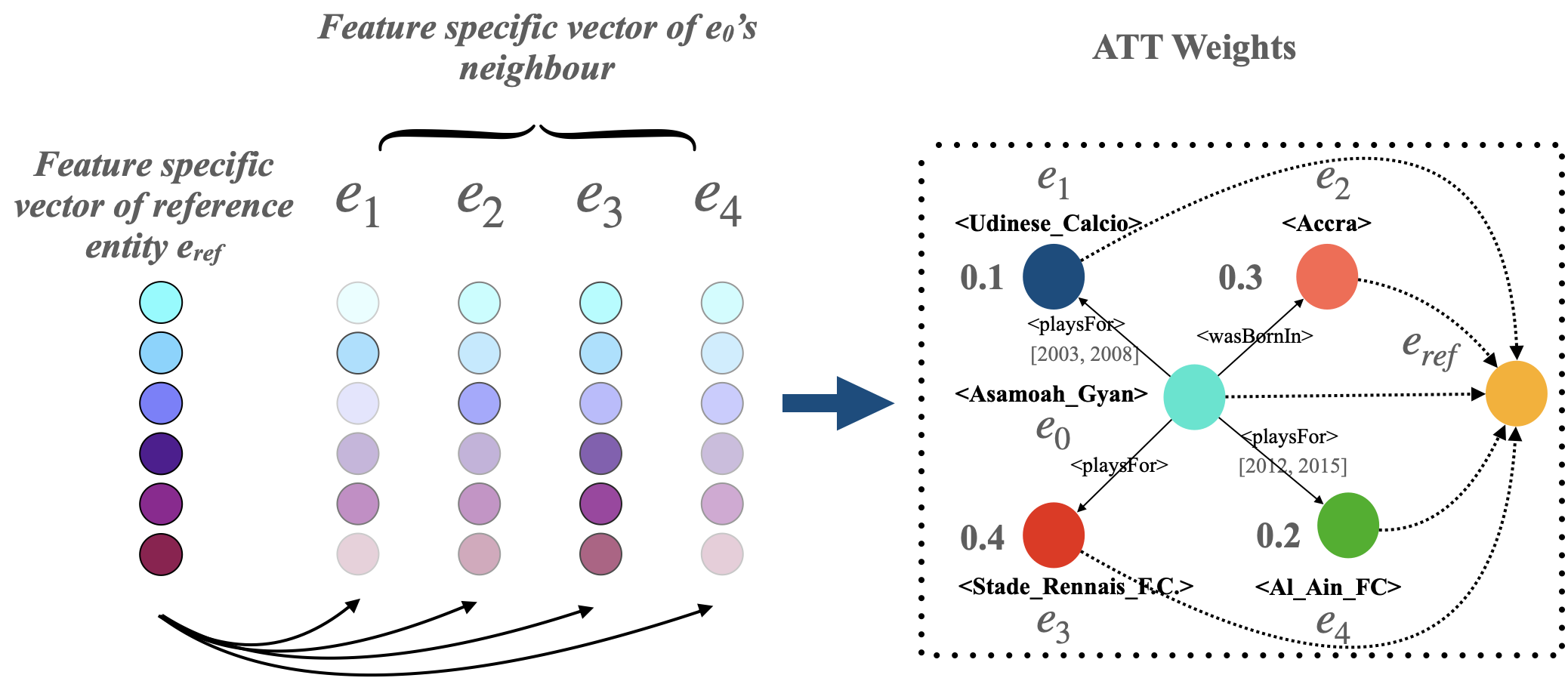}
     \caption{Example of richness-guided attention weights.}
     \label{fig:attention weight}
     \vspace{-4mm}
\end{figure}

\subsubsection{Richness-guided Feature Concatenation}
Our preliminary study (Table \ref{tab:unique correct}) also reveals that a straightforward concatenation of the structural and temporal features may degrade the alignment performance, calling for an adaptive weighting mechanism that can dynamically adjust the relative importance of both features for each specific alignment case. Similarly, these weights can be inferred based on the entity's structural and temporal richness. Inspired by \cite{chen2023meaformer} which applies weighted concatenation for balancing multi-modal features, we design a richness-guided adaptive weighting scheme to balance the contributions of structural and temporal features for learning the embedding of a specific entity.

Specifically, we utilize the reference similarity between an entity's type-specific embedding and the reference embedding to construct an initial \textit{reference matrix} $\mathbf{x}$: 
\begin{align}
\small
\mathbf{x} = 
\begin{bmatrix}
\mathcal{S}_{e_1(E)} & \mathcal{S}_{e_1(R)} & \mathcal{S}_{e_1(T)} & \mathcal{S}_{e_1(I)} \\
\mathcal{S}_{e_2(E)} & \mathcal{S}_{e_2(R)} & \mathcal{S}_{e_2(T)} & \mathcal{S}_{e_2(I)} \\
\vdots & \ddots & \vdots \\
\mathcal{S}_{e_n(E)} & \mathcal{S}_{e_n(R)} & \mathcal{S}_{e_n(T)} & \mathcal{S}_{e_n(I)} \\
\end{bmatrix}
\end{align}
The four columns represent the reference similarity for the four types of features, i.e., entities, relations, temporal points and intervals, respectively. 

We then apply a single multilayer perceptron (MLP) to perform a non-linear transformation on the initial reference matrix $\mathbf{x}$ to infer the adaptive entity-wise concatenation weights used in the mixed encoder $\mathcal{M}_{mix}$ (i.e., Eq. \ref{eq:weighted_combine}). Formally,
\begin{align}
\mathbf{h} = \phi(\mathbf{W}_1\mathbf{x} + \mathbf{b}_1), \quad
\mathbf{w} = \sigma(\mathbf{W}_2 \mathbf{h} + \mathbf{b}_2)
\end{align}
$\phi$ and $\sigma$ denotes MLP and sigmoid function, where $\mathbf{h}$ is the intermediate result, and $w_i \in \mathbf{w}$ is the weight obtained for each entity $e_i$. $\boldsymbol{W_1}$, $\boldsymbol{W_2}$, $\boldsymbol{b_1}$, $\boldsymbol{b_2}$ are learnable parameters.

\subsection{Neighborhood Consensus De-noising and Iterative Training}
To enhance the alignment compatibility among entities and their neighbors (i.e., neighbors of aligned entities should also be matched), we employ a neighborhood consensus model \cite{fey2020deep} to refine the dual-aspect feature encoders $\mathcal{M}_{stru}$ and $\mathcal{M}_{temp}$ for de-noising the learned entity embeddings, ensuring that aligned entities share similar local structures. Building on the complementary nature of structural and temporal features, we further introduce a \textit{dual-view neighborhood consensus strategy}, enabling mutual enhancement between the two modalities through joint training. 

In particular, given two TKGs $\mathcal{G} = (\mathcal{E}, \mathcal{R}, \mathcal{T}, \mathcal{F})$ and $\mathcal{G}^\prime = (\mathcal{E}^\prime, \mathcal{R}^\prime, \mathcal{T}^\prime, \mathcal{F}^\prime)$, we generate their structural and temporal feature embeddings using the corresponding encoders $\mathcal{M}_{stru}$ and $\mathcal{M}_{temp}$, and jointly refine both encoders by concatenating their learned embeddings:
\begin{align}
\mathbf{H} = [\mathbf{H}_{stru} \, || \, \mathbf{H}_{temp}]
\end{align}
We adopt the FAISS framework \cite{douze2024faiss} to perform efficient top-$k$ embedding retrieval from large-scale TKGs, as the top-$k$ candidates typically contain the most informative alignment signals:
\begin{equation}
     \mathbf{\hat{S}} = Softmax(Faiss_k [\mathbf{H} \cdot \mathbf{H}^\prime])
\end{equation}
where $\mathbf{\hat{S}}$ is the top-$k$ retrieved probability score matrix, and $exp$ is the softmax function. We then apply a multi-layer perceptron (MLP) to iteratively refine the similarity matrix, leveraging the injective node distances derived from the propagated features of the two TKGs to minimize the achieved embedding distances for two counterpart embeddings (Details in Appendix \ref{appendix:consensus}).

Our \modelname{} model can be further extended to an iterative training framework, denoted as \modelname{}$^+$, which progressively incorporates high-quality seeds for model training. To this end, we design a \textit{dual-aspect bi-directional seed selection} mechanism that exploits the complementary nature of the two orthogonal features—structural and temporal information—to collaboratively eliminate the influence of noisy pseudo-labels (Details in Appendix \ref{appendix:iterative}).
\section{Experiments}

\subsection{Experimental Settings}
\label{subsec:exp}
\noindent\textbf{Datasets.}
We conduct experiments on both homogeneous and heterogeneous TEA benchmarks (dataset statistics in Appendix \ref{appendix:dataset}). \cite{li2025htea} identifies a key limitation of existing TEA benchmarks—neglect of temporal heterogeneity across TKGs—and constructs a more realistic YAGO-WIKI180K dataset for evaluation. \cite{zeng2024benchmarking} introduces BTEA, a fine-grained TEA dataset that incorporates hybrid temporal triples and further refines temporal information to the date level. We present experimental results on YAGO-WIKI180K and BTEA in this section, while additional evaluations on the homogeneous TEA datasets are reported in Appendix \ref{appendix:homogeneous}.

\noindent\textbf{Baselines.}
We compare \modelname{} with eight state-of-the-art TEA approaches, including two structural embedding models and six temporal-aware models. 1) \textit{time-unaware}: Dual-AMN \cite{mao2021boosting} and LightEA \cite{mao2022lightea}; 2) \textit{time-aware}: TEA-GNN \cite{xu2021time}, STEA \cite{cai2022simple}, DualMatch \cite{liu2023unsupervised}, LightTEA \cite{cai2023effective}, MGTEA\cite{zeng2024benchmarking}, and HTEA\cite{li2025htea}. We ignore models that utilize various side information (entity names, entity descriptions, attributes) for a fair comparison.

\noindent\textbf{Parameter Settings.}
We report the detailed parameter settings for the YAGO-WIKI180K and BETA datasets to support better reproducibility of the experimental results.

For the YAGO-WIKI180K dataset, we set the embedding dimension to $d = 50$, GNN depth to $l = 2$, number of attention layers to $L = 1$, batch size to $b = 512$, and dropout rate to $\text{drop} = 0.3$. The model is optimized using RMSprop with a learning rate $\eta = 0.005$. For the neighborhood consensus module, we select the top-$k = 15$ most similar entities. The randomized embedding dimension is set to $d_r = 32$, and the number of MLP propagation steps to $s = 5$, with the MLP comprising 2 layers and a hidden dimension of 32. For the dynamic weighting module, the MLP also consists of 2 layers but uses a smaller hidden dimension of 4. In the semi-supervised setting, the number of training iterations is set to $t = 2$ to balance effectiveness and efficiency. The training schedule includes $E_{\text{rel}} = 20$, $E_{\text{temp}} = 60$, $E_{\text{mix}} = 5$, and $E_{\text{neigh}} = 90$ epochs for the respective modules.

For the BETA dataset, the main differences lie in the neighborhood consensus component, where we set the MLP propagation steps to $s = 25$ and select the top-$k = 45$ most similar entities. The training epochs are adjusted accordingly: $E_{\text{rel}} = 8$, $E_{\text{temp}} = 16$, $E_{\text{mix}} = 1$, and $E_{\text{neigh}} = 110$.

\noindent\textbf{Evaluation Metrics.}
Building on previous research, we use Mean Reciprocal Rank (MRR) and Hits at Top-N (H@N) as evaluation metrics. H@N measures the precision of the top-N retrieved entities, while MRR is calculated using the average reciprocal rank of all entities.

All experiments are conducted on a Linux cluster equipped with an AMD EPYC 3 Milan CPU, 
an NVIDIA H100 GPU, and 1,500GB of RAM. Each experiment is repeated five times and the 
mean results are reported. Code is available at our public repository.\footnote{\url{https://github.com/psych-mind/RCTEA_}}

\begin{table*}[ht]
\caption{Overall TEA performance on YAGO-WIKI180K (2000 seeds). The p-value is the result of one sample t-test between \modelname{} and their corresponding strong baselines.}
\label{tab:mainresult}
\centering
\small
\begin{tabular}{c ccc ccc ccc ccc}
\toprule
\multirow{2}{*}{\textbf{Model}} 
& \multicolumn{3}{c}{\textbf{All}} 
& \multicolumn{3}{c}{\textbf{Non-tem $(50.4\%)$}} 
& \multicolumn{3}{c}{\textbf{Sparse-tem $(43\%)$}} 
& \multicolumn{3}{c}{\textbf{Dense-tem $(6.6\%)$}} \\
\cmidrule(lr){2-4} \cmidrule(lr){5-7} \cmidrule(lr){8-10} \cmidrule(lr){11-13}
& MRR & H@1 & H@10 & MRR & H@1 & H@10 & MRR & H@1 & H@10 & MRR & H@1 & H@10 \\
\midrule
Dual-AMN    & .381 & .330 & .476 & .159 & .110 & .253 & .572 & .521 & .668 & .824 & .768 & .928\\
LightEA     & .376 & .329 & .464 & .169 & .121 & .260 & .562 & .516 & .647 & .742 & .691 & .838 \\
\midrule
STEA        & .308 & .277 & .366 & .007 & .004 & .112 & .496 & .457 & .570 & .911 & .880 & .963\\
TEA-GNN     & .352 & .307 & .436 & .145 & .101 & .227 & .519 & .472 & .605 & .843 & .794 & .931\\
LightTEA    & .361 & .317 & .442 & .150 & .108 & .231 & .548 & .505 & .628 & .744 & .696 & .833\\
DualMatch   & .388 & .342 & .474 & .157 & .103 & .259 & .577 & .538 & .650 & .912 & .882 & .966\\
MGTEA &.405 &.355 &.499 &.185 &.126 &.300 &.583 &.541 &.661 &.920 &.893 &.967\\ 
HTEA\footnotemark     & .404 & .354 & .500 & .183 & .128 & .290 & .591 & .544 & .679 & .876 & .846 & .930\\
\textbf{\modelname{}} & \textbf{.459} & \textbf{.413} & \textbf{.545} & \textbf{.226} & \textbf{.168} & \textbf{.336} & \textbf{.655} & \textbf{.618} & \textbf{.723} & \textbf{.960} & \textbf{.947} & \textbf{.983}\\
\textit{p}-value &1e-4 &3e-5 &7e-4 &5e-3 &2e-3 &2e-3 &7e-5 &3e-5 &3e-4 &3e-3 &2e-3 &1e-2 \\
\midrule
\textbf{\modelname{}$^+$ (iter)} & \textbf{.468} & \textbf{.423} & \textbf{.554} & \textbf{.232} & \textbf{.173} & \textbf{.344} & \textbf{.669} & \textbf{.633} & \textbf{.735} & \textbf{.971} & \textbf{.960} & \textbf{.987}\\
\textit{p}-value &6e-5 &2e-5 &4e-4 &3e-3 &1e-3 &2e-3 &3e-5 &1e-5 &1e-4 &2e-3 &1e-3 &1e-2\\
\bottomrule
\end{tabular}
\vspace{-4mm}
\end{table*}
\footnotetext{The reported results for HTEA differ from those in the original paper, as we standardize the output embedding size across all models to ensure a fair comparison.}

\begin{table}[ht]
\caption{Overall TEA performance on BETA .}
\label{tab:mainresult}
\centering
\small
\begin{tabular}{c cc cc}
\toprule
\multirow{2}{*}{\textbf{Model}} 
& \multicolumn{2}{c}{BETA (\textbf{$10\%$})} 
& \multicolumn{2}{c}{BETA (\textbf{$30\%$})} \\
\cmidrule(lr){2-3} \cmidrule(lr){4-5}
  & H@1 & H@10 & H@1 & H@10 \\
\midrule
Dual-AMN  &.566 &.695  &.589 &.757    \\
LightEA   &.492 &.673  &.595 &.744  \\
\midrule
STEA   &.506 &.637  &.556 &.681 \\
TEA-GNN  &.557 &.709  &.609 &.751  \\
LightTEA   &.615 &.727  &.666 &.769\\
DualMatch   &.650 &.765  &.689 &.796\\
MGTEA  &\textbf{.686} &.795 &.711 &.813 \\
HTEA   &.647 &.783 &.660 &.772   \\

\textbf{\modelname{}}   &.681 &\textbf{.797} &\textbf{.713} &\textbf{.816} \\
\textit{p}-value  &3e-2 &1e-2 &5e-3 &5e-3\\
\midrule
\textbf{\modelname{}$^+$ (iter)} &\textbf{.707} &\textbf{.813} &\textbf{.720} &\textbf{.819}  \\
\textit{p}-value &8e-3 &7e-3 &3e-3 &4e-3\\
\bottomrule
\end{tabular}
\vspace{-4mm}
\end{table}

\subsection{Main Results}
We consider the following research question in this section: \textbf{RQ1: Can \modelname{} enhance TEA performance over existing baselines?} We compare \modelname{} with existing TEA models and report their alignment performance on the overall YAGO-WIKI180K dataset and various subsets of different characteristics: \textit{Dense-tem} with rich temporal features, \textit{Sparse-tem} with scarce temporal information, and \textit{Non-tem} without any temporal facts. 
As presented in Table \ref{tab:mainresult}, \modelname{} consistently achieves the state-of-the-art performance on all variants of the YAGO-WIKI180K dataset, surpassing HTEA and MGTEA, the second-best models, by nearly $6\%$ on all evaluation metrics. In particular, \modelname{} outperforms HTEA and MGTEA on the Non-tem, Sparse-tem and Dense-tem dataset variants by around $4\%$, $7\%$ and $5\%$, respectively, attributed to our design of richness-guided attention and adaptive weighting mechanisms that fully utilize the feature richness to enhance representation learning. The alignment accuracy further increases when the \modelname{} model is iteratively trained with high-quality pseudo-labels (\# iterations = 2 by default), i.e., \modelname{}$^+$ in Table \ref{tab:mainresult}.

For the BETA dataset, even without leveraging fine-grained temporal information (e.g., date annotations), our model achieves superior performance across all seed settings at both 
$10\%$ and $30\%$. In particular, \modelname{} attains competitive Hit@1 performance in low-resource settings and outperforms other baselines in the remaining settings, despite not employing the time-consuming temporal encoders used in MGTEA and DualMatch. This further demonstrates the effectiveness and efficiency of our approach. Moreover, the iterative variant \modelname{}$^+$ also yields notable performance gains, especially in the low-resource setting.

\subsection{Ablation Study}
\label{exp:ablation}
We further explore the contribution of each component in our \modelname{} framework, and present the detailed results in Table \ref{tab:ablation}.

\begin{table}[ht]
\caption{Ablation study on YAGO-WIK180K (H@1).}
\label{tab:ablation}
\centering
\small
\begin{tabular}{c c c c c}
\toprule
    \textbf{Model} & \textbf{All} & \textbf{Non} & \textbf{Sparse} & \textbf{Dense}\\
    \midrule
    \textbf{\modelname{}} &.4135 &.1684 &.6183 &.9465\\
    \modelname{} w/o. E &.1933 &.0013 &.3300 &.6758\\
    \modelname{} w/o. R &.3588 &.1033 &.5713 &.9213\\
    \modelname{} w/o.T &.3874 &.1377 &.6004 &.9031\\
    \modelname{} w/o. I &.3807 &.1349 &.5905 &.8868 \\
    \midrule
    \modelname{} w/o. ET &.3937 &.1416 &.6082 &.9147\\
    \modelname{} w/o. RT &.3936 &.1381 &.6117 &.9197\\
    \modelname{} w/o. TT &.3942 &.1416 &.6095 &.9170\\
    \modelname{} w/o. IT &.3922 &.1381 &.6094 &.9129\\
    \modelname{} w. GAT &.3825 &.1402 &.5881 &.8887 \\
    \modelname{} w. RST &.3919 &.1520 &.5986 &.8726 \\
    \midrule
    \modelname{} w. EW &.3919 &.1371 &.6099 &.9142\\
    \modelname{} w/o. DW &.3951 &.1427 &.6103 &.9166\\
    \midrule
    \modelname{} w/o. NC &.4065 &.1607 &.6120 &.9414\\
    \modelname{} w/o. DUAL &.4120 &.1710 &.6125 &.9416\\
    \bottomrule
\end{tabular}
 \vspace{-4mm}
\end{table}

\noindent\textbf{RQ2: Is it necessary to combine all features in \modelname{}?} We evaluate the importance of each type of features, i.e., entities, relations, temporal points, and temporal intervals, by excluding them separately from the \modelname{} model. As observed from the first part of Table \ref{tab:ablation}, the alignment performance drops in each case, highlighting the effectiveness of our dual-aspect feature encoders to integrate both structural and temporal perspectives for representation learning.

\noindent\textbf{RQ3: Can the richness-guided attention mechanism enhance alignment performance?} We examine the effectiveness of richness-guided attention weights for feature propagation by selectively disabling them. Variants \modelname{} w/o. ET, RT, TT, and IT represent the removal of attention weights for entities, relations, temporal points, and temporal intervals, respectively. The experimental results confirm that each feature-specific attention contributes positively to the overall model, validating the effectiveness of our multi-view attention design. We further compare the proposed richness-guided attention with other attention mechanisms used by existing TEA models, including naïve GAT attention (\modelname{} w. GAT) and relation-specific attention (\modelname{} w. RST). Our approach consistently outperforms both variants, showcasing the importance of feature richness for attention learning.

\noindent\textbf{RQ4: Is adaptive weighting necessary for feature fusion?} We evaluate the effectiveness of the dynamic weighting strategy by modifying our mixed feature encoder $\mathcal{M}_{mix}$ with equal weights (\modelname{} w. EW) or removing the dynamic weighting module entirely (\modelname{} w/o. DW). Experimental results demonstrate that our adaptive weighting mechanism enhances overall alignment performance, surpassing the equally weighted version by a large extent.

\noindent\textbf{RQ5: What is the impact of dual-view neighborhood consensus algorithm?} Finally, we investigate the importance of neighborhood consensus for entity alignment. \modelname{} w/o. NC disables the neighborhood consensus module, while \modelname{} w/o. DUAL removes the dual-view feature representation used in the consensus process. Experimental results indicate that both components are crucial for effective feature refinement and alignment inference.

\subsection{Efficiency Study}
Finally, we compare \modelname{} with existing TEA approaches in terms of model efficiency, considering the following two research questions: 
\textbf{RQ6: Can \modelname{} balance alignment accuracy and model efficiency?}
\textbf{RQ7: Can the iterative training of \modelname{}$^+$ further enhance alignment performance, and at what cost?} From Figure \ref{fig:efficiency}(a), we can see that \modelname{} achieves comparable time consumption with the most efficient model LightTEA and HTEA, while further enhances the alignment accuracy, which validates our model's superiority. Figure \ref{fig:efficiency}(b) indicates that the training time increses linearly with the iterative training, and the accuray improves mostly between the first and second iterations. 

\begin{figure}[htbp]
    \centering
    \includegraphics[width=\linewidth]{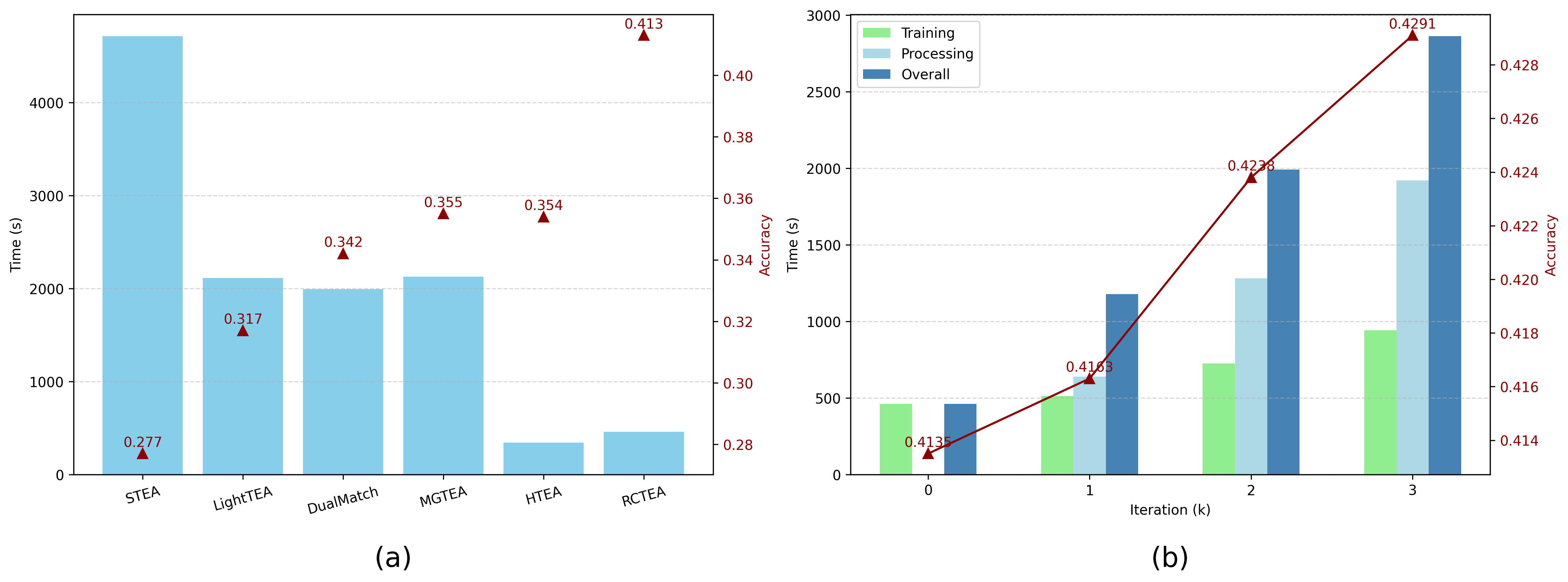}
    \vspace{-4mm}
    \caption{Efficiency evaluation on YAGO-WIKI180K. Specifically: (a) compares the time consumption and overall performance across different models; (b) illustrates the relationship between model iterations and performance for our approach.}
    \label{fig:efficiency}
    \vspace{-4mm}
\end{figure}

\section{Conclusion}
Existing TEA models tend to underestimate the importance of relational and temporal features, as well as the inherent noise present in both. Our proposed \modelname{} model features an efficient split-combined training framework and a dynamic weighting mechanism that adaptively balances different feature aspects. To mitigate noise, we introduce a joint neighborhood consensus strategy that enhances the representativeness of both views. Additionally, our dual-aspect seed selection method significantly reduces noise in general seed selection. Extensive experiments on TEA datasets validate the efficiency and effectiveness of our approach. 

\section{Acknowledgements}
This work was partially supported by the Australia Research Council under the Discovery Project (Grant No. DP230101753).
\section*{Limitation}
This paper explores a richness measurement to guide graph neural networks in propagating and integrating informative features, thereby providing explicit alignment signals to the model. However, when the TKGs evolve over time, the corresponding feature richness also needs to be incrementally refined, which may incur significant extra cost for frequently-updated TKGs. Hence, extending the proposed \modelname{} model to evolving TKGs constitutes an important direction for future work. Moreover, investigating how to effectively balance or integrate various attention mechanisms—such as standard GAT, relation-specific attention, and other variants—is another promising research direction.


\bibliography{custom}

\appendix

\newtheorem{theorem}{Theorem}

\section{Model Details}

\subsection{Neighborhood Consensus De-noising}
\label{appendix:consensus}
To enhance alignment consistency between entities and their local neighborhoods, we incorporate a neighborhood consensus model \cite{fey2020deep} to refine the dual-aspect encoders $\mathcal{M}_{stru}$ and $\mathcal{M}_{temp}$. This approach mitigates noise in the learned embeddings by encouraging aligned entities to exhibit similar neighborhood structures. Leveraging the complementary nature of structural and temporal features, we further propose a \textit{dual-view neighborhood consensus strategy} that enables mutual reinforcement between the two modalities through joint training. This design facilitates bidirectional enhancement between structural and temporal signals, thereby improving the quality and robustness of the final entity representations.

In particular, given two TKGs $\mathcal{G} = (\mathcal{E}, \mathcal{R}, \mathcal{T}, \mathcal{F})$ and $\mathcal{G}^\prime = (\mathcal{E}^\prime, \mathcal{R}^\prime, \mathcal{T}^\prime, \mathcal{F}^\prime)$, we first generate their structural and temporal feature embeddings using the corresponding encoders $\mathcal{M}_{stru}$ and $\mathcal{M}_{temp}$:
\begin{align}
\mathbf{H}_{stru}, \mathbf{H}_{stru}^\prime &= \mathcal{M}_{stru}(\mathbf{A}_{(E)}, \mathbf{A}_{(R)}) \\
\mathbf{H}_{temp}, \mathbf{H}_{temp}^\prime &= \mathcal{M}_{temp}(\mathbf{A}_{(T)}, \mathbf{A}_{(I)})
\end{align}
$\mathbf{A}_{(\mathcal{C})}$ is the bipartite adjacency matrix of the TKG for a specific feature type $\mathcal{C}\in\{E,R,T,I\}$.

Instead of relying on top-$k$ entity retrieval from a single encoder to de-noise the corresponding feature based on neighborhood consensus, we jointly refine both encoders by concatenating their learned embeddings:
\begin{align}
\mathbf{H},\mathbf{H}^\prime = [\mathbf{H}_{stru} \, || \, \mathbf{H}_{temp}],[\mathbf{H}_{stru}^\prime \, || \, \mathbf{H}_{temp}^\prime]
\end{align}
This approach leverages the complementary perspectives of the two encoders to enhance overall alignment performance. We adopt the FAISS framework \cite{douze2024faiss} to perform efficient top-$k$ embedding retrieval from large-scale TKGs, as the top-$k$ candidates typically encapsulate the most informative signals for entity alignment:
\begin{equation}
     \mathbf{\hat{S}} = Softmax(Faiss_k [\mathbf{H} \cdot \mathbf{H}^\prime])
\end{equation}
The resulting top-$k$ probability score matrix, denoted by $\mathbf{\hat{S}}$, is computed using a softmax function over the retrieved similarities.

Finally, the consensus model utilizes a multi-layer perceptron (MLP) to iteratively refine the similarity matrix, leveraging the injective node distances derived from the propagated features of the two TKGs to minimize the achieved embedding distances for two counterpart embeddings, as detailed below:
\begin{align}
\boldsymbol{O}, \boldsymbol{O}^\prime  = \boldsymbol{\Phi}_{\theta_2}(\boldsymbol{I}_{|\mathcal{E}|}, \boldsymbol{A}_{(E)}),\ \boldsymbol{\Phi}_{\theta_2}(\boldsymbol{\hat{S}}^{T}\boldsymbol{I}_{|\mathcal{E}|}, \boldsymbol{A}_{(E)})
\end{align}
\begin{align}
  \hat{S}_{i, j}^{l + 1} = \hat{S}_{i, j}^{l} + \mathbf{\Phi}_{\theta_2}(d_{i, j}), 
  \ \mathrm{where} \ l\geq0
\end{align}
\begin{align}
\mathbf{S^{L}} = Softmax(\mathbf{\hat{S}^{L}})
\end{align}
Here, $d_{ij} = \vec{o_i} - \vec{o_j}$ measures the distance between two nodes, $\mathbf{A}_{(E)}$ is the adjacency matrix, $\mathbf{O}$ and $\mathbf{O}^\prime$ are the reconstructed embeddings for the two TKGs, $d_{ij}$ is the distance between the two reconstructed counterpart embeddings, $\mathbf{I} \sim \mathcal{N}(0, 1)$ is used for randomized embeddings, $\mathbf{\Phi}_{\theta_1}$ is a simple GNN, and $\mathbf{\Phi}_{\theta_2}$ is the defined MLP.

To train the dual-view neighborhood consensus module, we maximize the log-normalized probability scores of the seed alignments for refining the structural and temporal encoders, as defined below:
\begin{equation}
\mathcal{L}_{nc}  = -\sum_{(e_i,e_i^\prime)\in\mathcal{S}}log(S^{L}_{i, i^\prime})
\end{equation}

\subsection{\modelname{}$^+$ with Iterative Training}
\label{appendix:iterative}
Our \modelname{} model can be further extended to an iterative training framework, denoted as \modelname{}$^+$, which progressively incorporates high-quality seeds for model training. To this end, we design a \textit{dual-aspect bi-directional seed selection} mechanism to exploit the complementary nature of the two orthogonal features—structural and temporal information—to collaboratively eliminate the influence of noisy pseudo-labels. In particular, we employ the two specialized feature encoders: $\mathcal{M}_{stru}$ for identifying structural-based seed alignments and $\mathcal{M}_{temp}$ for temporal-based matched pairs. To maintain high-quality seed selection, we impose bi-directional constraints \cite{mao2020mraea} derived from both structural and temporal views, ensuring the consistency and reliability of the selected seeds. The complete seed selection procedure is outlined in Algorithm 1:

\begin{algorithm}[ht]
\caption{Seed Selection}
\label{algo:simple_seed_selection}
\KwIn{
    $S_r, S_t$: structural and temporal similarity matrices,
    $P$: pre-aligned seed pairs
}
\KwOut{
    $\hat{P}$: selected seed pairs
}

$\hat{P} \leftarrow P$;

\For{$e \in E$}{

$e_r^\ast \gets \arg\max_{e'} S_r(e,e'), \quad
 e_t^\ast \gets \arg\max_{e'} S_t(e,e')$\;
    
    \If{
        $\mathrm{argmax}_{e''} S_r[e'', e_r^*] = e$ \textbf{and}
        $\mathrm{argmax}_{e''} S_t[e'', e_t^*] = e$ \textbf{and}
        $e_r^* = e_t^*$
    }{
        $\hat{P} \leftarrow \hat{P} \cup \{(e, e_r^*)\}$\;
    }
}
\Return $\hat{P}$\;
\end{algorithm}

\subsection{Sinkhorn Operator}
The Sinkhorn operator utilizes a fast and effective way to compute the assignment problem. It iteratively normalizes rows and columns to generate results in a doubly stochastic matrix, as follows:
\begin{align*}
    &Sinkhorn^{0}(S) = \exp(S),\\
    &Sinkhorn^{m}(S) = \mathcal{N}_c(\mathcal{N}_r(Sinkhorn^{m-1}(S))),\\
    &Sinkhorn(S) = \lim_{x\to\infty}\ Sinkhorn^{m}(S)
\end{align*}
where $\mathcal{N}_r(S) = S \oslash (S1_{N}1_{N}^{T})$, $\mathcal{N}_c(S) = S \oslash (1_{N}1_{N}^{T}S)$ are row and column-wise normalization operators of a matrix, $\oslash$ represents the element-wise division, and $1_{N}$ is a column vector of ones.

\section{Proof of the Richness Measurement}
 
\begin{theorem}[Effect of Neighborhood Diversity on Reference Weight Contribution]
Let an entity have a neighborhood degree of fixed size $N$ and neighborhood number $n$, where $w_r$ of its count are reference entities. We define the log-normalized weight assigned to the reference embedding as:
\begin{align*}
\alpha_r = \frac{\log(w_r + 1)}{\sum_{i=1}^n \log(w_i + 1)}
\end{align*}
Consider two cases:
\begin{itemize}
    \item \textbf{Low-diversity case:} All $N - w_r$ non-reference degrees are identical (i.e., low diversity).
    \item \textbf{High-diversity case:} All $N - w_r$ non-reference degrees are unique (i.e., high diversity).
\end{itemize}
Then, under log-normalization, the weight $\alpha_r$ assigned to the reference embedding is strictly smaller in the high-diversity case than in the low-diversity case, provided that $N - w_r > 1$. Consequently, the contribution of the reference embedding to the target entity representation is lower under high-diversity conditions, increasing the distance between the target entity representation and the reference embedding.
\end{theorem}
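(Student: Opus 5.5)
The plan is to make the two cases concrete in terms of Eq.~\ref{eq:bipartite_weight}, reduce the comparison of $\alpha_r$ to a comparison of denominators, and then settle the denominator comparison with an elementary inequality. Write $m = N - w_r$ for the non-reference count, and assume $m\ge 2$. I will read the two cases in the natural way.
\begin{itemize}
\item In the \emph{low-diversity} case, all $m$ non-reference facts involve one and the same feature. That feature has $|\mathcal{F}_{ic}| = m$, so the non-reference part of the denominator is $\log(m+1)$.
\item In the \emph{high-diversity} case, the $m$ non-reference facts involve $m$ distinct features, each with $|\mathcal{F}_{ic}|=1$. The non-reference part of the denominator is then $m\log 2$.
\end{itemize}
The numerator $\log(w_r+1)$ is the same in both cases. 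So
\begin{align*}
\alpha_r^{\mathrm{low}} = \frac{\log(w_r+1)}{\log(w_r+1)+\log(m+1)}, \qquad \alpha_r^{\mathrm{high}} = \frac{\log(w_r+1)}{\log(w_r+1)+m\log 2}.
\end{align*}
If the intended low-diversity case instead allows several identical non-reference counts, I would handle it by the same route. Concavity of $x\mapsto\log(1+x)$ makes the log-sum over a partition of $m$ into parts increase as the parts are refined, and the all-ones partition is the finest, so it again gives the largest denominator.

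The key step is therefore to show $\log(m+1) < m\log 2$, that is, $m+1 < 2^m$, for every integer $m\ge 2$. I would prove this by induction. The base case is $3<4$. For the step, $2^{m+1} = 2\cdot 2^m > 2(m+1) \ge m+2$. Alternatively one can use strict convexity of $2^x$ together with equality at $m=0$ and $m=1$. This inequality is exactly where the hypothesis $N-w_r>1$ is needed, since equality holds at $m=1$. It gives a strictly larger denominator in the high-diversity case. Since $w_r\ge 1$ makes the numerator positive, we get $\alpha_r^{\mathrm{high}}<\alpha_r^{\mathrm{low}}$. If $w_r=0$, both weights vanish and the statement is vacuous, so I would note that $w_r\ge1$ is assumed.

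The main obstacle is the ``consequently'' clause, which links a smaller $\alpha_r$ to a lower reference similarity $\mathcal{S}_{e_i(\mathcal{C})}$ in Eq.~\ref{eq:reference_distance}. I would decompose the layer-0 embedding as $h^0_{e_i} = \alpha_r f_{ref} + (1-\alpha_r)\bar v$, where $\bar v$ is the normalized weighted average of the non-reference feature rows of $\mathbf{F}_{(\mathcal{C})}$. I would then make the modeling assumption that $\bar v$ is approximately orthogonal to $f_{ref}$, which is reasonable for random initialization in high dimension. Under that assumption,
\begin{align*}
\cos(h^0_{e_i}, f_{ref}) \approx \frac{\alpha_r\|f_{ref}\|}{\sqrt{\alpha_r^2\|f_{ref}\|^2+(1-\alpha_r)^2\|\bar v\|^2}},
\end{align*}
and this expression is increasing in $\alpha_r$. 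The remaining difficulty is that $\|\bar v\|$ itself shrinks when many distinct features are averaged. The last step is to check that this shrinkage does not reverse the monotonicity, or else to state explicitly that the conclusion holds under comparable norms. Because this step is heuristic, I would present the strict inequality on $\alpha_r$ as the rigorous content and the distance statement as a consequence under that assumption.
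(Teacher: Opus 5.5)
Your proposal is correct and follows the paper's proof. It uses the same closed forms for $\alpha_r^{\text{low}}$ and $\alpha_r^{\text{high}}$ and the same reduction to $\log(m+1)<m\log 2$ for $m\ge 2$; you prove that inequality by induction, whereas the paper only asserts it, and the paper likewise just asserts the distance consequence (its cosine argument is the separate reference-portion theorem, which holds $C=\|e^{\text{other}}\|^2$ fixed). Your worry about $\|\bar v\|$ is justified, and the check you propose would in fact fail: with orthonormal feature rows and $L=\log(w_r+1)$, the two cosines are $L/\sqrt{L^2+\log^2(m+1)}$ and $L/\sqrt{L^2+m\log^2 2}$, the high-diversity one is the larger for $2\le m\le 18$, so the comparable-norm assumption is essential (also, $w_r=0$ makes the strict inequality false rather than vacuous).
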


\begin{proof}
Let $N$ be the total number of neighbors' degrees (fixed), and $w_r$ be the count of reference degrees.
We define the normalized weight assigned to the reference embedding in each case.

\noindent\textbf{Case 1: Low Diversity} (all $N - w_r$ non-reference degrees are the same):
\[
\alpha_r^{\text{low}} = \frac{\log(w_r + 1)}{\log(w_r + 1) + \log(N - w_r + 1)}
\]
\textbf{Case 2: High Diversity} (all $N - w_r$ non-reference degrees are distinct):
\[
\alpha_r^{\text{high}} = \frac{\log(w_r + 1)}{\log(w_r + 1) + (N - w_r)\log(2)}
\]
We want to show that:
\[
\alpha_r^{\text{high}} < \alpha_r^{\text{low}}
\]
This is equivalent to proving:
\[
(N - w_r)\log 2 > \log(N - w_r + 1)
\]
This inequality holds for $N - w_r > 1$, since for all $x \geq 2$:
\[
x \log 2 > \log(x + 1)
\]
Therefore:
\[
\alpha_r^{\text{high}} < \alpha_r^{\text{low}} \quad \text{whenever} \quad N - w_r > 1
\]
Thus, increasing neighborhood diversity (with a fixed reference count $w_r$) lowers the normalized weight $\alpha_r$, reducing the contribution of the reference embedding and increasing its relative distance to the target entity representation.
\end{proof}

\begin{theorem}[Increasing Neighborhood Size Reduces Reference Embedding Contribution]
Let the final embedding $e$ of an entity be computed from a set of $n$ neighborhood embeddings $\{e_i\}_{i=1}^n$ and a reference embedding $e_r$ as follows:
\begin{align*}
e = \frac{1}{Z} \left( \sum_{i=1}^{n} \log(w_i + 1) e_i + \log(w_r + 1) e_r \right)
\quad \\
\text{where} \quad Z = \log(w_r + 1) + \sum_{i=1}^{n} \log(w_i + 1)
\end{align*}
Assume each neighborhood embedding $e_i$ is a mixture:
\[
e_i = \alpha e_r + (1 - \alpha) e_i^{\text{other}}, \quad \text{where } \alpha \in [0, 1)
\]
Then, the total contribution of the reference embedding $e_r$ to $e$ is:
\[
W_{e_r} = \frac{\log(w_r + 1) + \sum_{i=1}^{n} \alpha \log(w_i + 1)}{\log(w_r + 1) + \sum_{i=1}^{n} \log(w_i + 1)}
\]
If we assume all neighborhood counts are equal, i.e., $\log(w_i + 1) = x > 0$ for all $i$, and $\log(w_r + 1) = A$, then:
\[
W_{e_r}(n) = \frac{A + n \alpha x}{A + nx}
\]
Then, $W_{e_r}(n)$ is a strictly decreasing function of $n$ for fixed $A > 0$, $x > 0$, and $\alpha \in [0, 1)$. That is, as the number of neighbors increases, the contribution of the reference embedding decreases.
\end{theorem}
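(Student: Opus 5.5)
The proof has two parts: first derive the formula for $W_{e_r}$ by collecting coefficients, then show monotonicity in $n$ by differentiating the closed form.

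\textbf{Deriving $W_{e_r}$.} Substitute the mixture assumption $e_i = \alpha e_r + (1-\alpha)e_i^{\text{other}}$ into the definition of $e$. Then collect all terms multiplying $e_r$:
\begin{align*}
e = \frac{1}{Z}\Big( \big(\log(w_r+1) + \alpha\textstyle\sum_{i=1}^n \log(w_i+1)\big) e_r + (1-\alpha)\textstyle\sum_{i=1}^n \log(w_i+1)\, e_i^{\text{other}} \Big).
\end{align*}
The coefficient of $e_r$ divided by $Z$ is exactly the stated $W_{e_r}$, provided we treat the $e_i^{\text{other}}$ as carrying no $e_r$ component. That is the implicit meaning of ``contribution'' in the statement, and I will state it explicitly.

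\textbf{Equal-weight specialization.} Setting $\log(w_i+1)=x$ for all $i$ and $\log(w_r+1)=A$ gives $W_{e_r}(n) = (A+n\alpha x)/(A+nx)$.

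\textbf{Monotonicity in $n$.} Extend $n$ to a real variable on $[0,\infty)$ and differentiate with the quotient rule:
\begin{align*}
\frac{d W_{e_r}}{dn} = \frac{\alpha x (A+nx) - x(A+n\alpha x)}{(A+nx)^2} = \frac{A x(\alpha - 1)}{(A+nx)^2}.
\end{align*}
Since $A>0$, $x>0$ and $\alpha<1$, this derivative is strictly negative, so the function is strictly decreasing on the reals and hence on the integers.

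As a cross-check that avoids calculus, rewrite
\begin{align*}
W_{e_r}(n) = \alpha + \frac{(1-\alpha)A}{A+nx}.
\end{align*}
Here the second term is positive and strictly decreasing in $n$. This form also shows $W_{e_r}(n) \to \alpha$ as $n\to\infty$.

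There is no real analytical obstacle; the computation is elementary. The only delicate points are interpretive. First, the definition of ``contribution'' as the $e_r$-coefficient needs the $e_i^{\text{other}}$ to be independent of $e_r$. Second, the hypotheses $A>0$ (which needs $w_r\ge 1$) and $\alpha<1$ are exactly what make the decrease strict. With $\alpha=1$ the function is constant, and with $A=0$ it equals $\alpha$ for every $n$.
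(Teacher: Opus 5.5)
Your proof is correct and follows the paper's argument: the paper likewise differentiates $W(n)=(A+n\alpha x)/(A+nx)$ by the quotient rule and simplifies the numerator to $xA(\alpha-1)<0$. Your explicit derivation of the $e_r$-coefficient (the paper only asserts the formula for $W_{e_r}$) and your calculus-free rewrite $W_{e_r}(n)=\alpha+(1-\alpha)A/(A+nx)$ are worthwhile additions, but they do not change the route.
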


\begin{proof}
Let us define:
\[
W(n) = \frac{A + n \alpha x}{A + n x}
\]
We compute the derivative of $W(n)$ with respect to $n$:
\[
\begin{aligned}
\frac{dW}{dn} 
&= \frac{d}{dn} \left( \frac{A + n \alpha x}{A + n x} \right)\\
&= \frac{\alpha x (A + n x) - x (A + n \alpha x)}{(A + nx)^2}
\end{aligned}
\]
We simplify the numerator:
\begin{align*}
&\alpha x (A + nx) - x (A + n \alpha x) \\
&= \alpha x A + \alpha n x^2 - x A - n \alpha x^2 \\
&= \alpha x A - x A \\
&= x A (\alpha - 1) < 0 \quad \text{since } \alpha < 1
\end{align*}
Thus, $\frac{dW}{dn} < 0$, meaning $W(n)$ is strictly decreasing with $n$.
Therefore, as the number of neighbors increases, the normalized contribution of the reference embedding $e_r$ to the final embedding $e$ decreases.
\end{proof}

\begin{theorem}[Smaller Reference Portion Reduces Cosine Similarity with Reference Embedding]
Let $e_r \in \mathbb{R}^d$ be a unit-norm reference embedding, i.e., $\|e_r\| = 1$.  
Let $e^{\text{other}} \in \mathbb{R}^d$ be an embedding orthogonal to $e_r$, i.e., $\langle e_r, e^{\text{other}} \rangle = 0$.  
We define the mixed embedding as:
\[
e(\alpha) = \alpha e_r + (1 - \alpha) e^{\text{other}}, \quad \text{with } \alpha \in [0, 1]
\]
Then, the cosine similarity between $e(\alpha)$ and $e_r$ is strictly increasing with $\alpha$.  
In other words, decreasing the reference portion $\alpha$ results in a lower cosine similarity with the reference embedding.
\end{theorem}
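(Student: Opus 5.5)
The plan is to compute the cosine similarity in closed form and then show directly that it is monotone in $\alpha$. Write $c = \|e^{\text{other}}\|$. I will assume $c > 0$, and I will state this explicitly as a needed hypothesis. If $e^{\text{other}} = 0$, then $e(\alpha) = \alpha e_r$, the cosine equals $1$ for every $\alpha > 0$, and it is undefined at $\alpha = 0$, so the claim fails without this assumption. With $c>0$, the vector $e(\alpha)$ is never zero on $[0,1]$, because its two components lie along orthogonal directions and $\alpha$ and $1-\alpha$ are never both zero. Hence the cosine is well defined on the whole interval.

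First, I will use orthogonality and $\|e_r\| = 1$ to get $\langle e(\alpha), e_r\rangle = \alpha$. By the Pythagorean identity,
\[
\|e(\alpha)\|^2 = \alpha^2 + (1-\alpha)^2 c^2 .
\]
Therefore
\[
f(\alpha) := \cos(e(\alpha), e_r) = \frac{\alpha}{\sqrt{\alpha^2 + (1-\alpha)^2 c^2}} .
\]

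Second, I will prove strict monotonicity without calculus. For $\alpha \in (0,1]$, divide the numerator and denominator by $\alpha > 0$:
\[
f(\alpha) = \left(1 + c^2 \left(\tfrac{1-\alpha}{\alpha}\right)^2\right)^{-1/2}.
\]
The map $\alpha \mapsto (1-\alpha)/\alpha = 1/\alpha - 1$ is strictly decreasing and nonnegative on $(0,1]$, so its square is also strictly decreasing there. Since $c^2 > 0$ and $t \mapsto (1+t)^{-1/2}$ is strictly decreasing on $[0,\infty)$, it follows that $f$ is strictly increasing on $(0,1]$, with values in $(0,1]$. At the left endpoint, $f(0) = 0 < f(\alpha)$ for every $\alpha > 0$, which extends strict monotonicity to all of $[0,1]$.

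As a cross-check, one can differentiate instead. The derivative is
\[
f'(\alpha) = \frac{(1-\alpha)c^2}{\left(\alpha^2 + (1-\alpha)^2 c^2\right)^{3/2}},
\]
which is positive on $[0,1)$; together with continuity of $f$ at $\alpha = 1$, this gives the same conclusion.

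The only real obstacle is the degenerate case $e^{\text{other}} = 0$ discussed above, which is why the hypothesis $c > 0$ must be made explicit. Once the claim is established, the final sentence of the statement follows immediately: decreasing the reference portion $\alpha$ lowers the cosine similarity. Combined with the two preceding theorems, which show that higher diversity and larger neighborhoods lower the reference portion, this justifies reading low reference similarity $\mathcal{S}_{e_i(\mathcal{C})}$ as high richness.
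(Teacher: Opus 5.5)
Your proposal is correct and follows the same route as the paper: orthogonality and $\|e_r\|=1$ give $\cos = \alpha/\sqrt{\alpha^2+(1-\alpha)^2 C}$, and monotonicity in $\alpha$ is then argued from this closed form. The paper asserts without proof that this function is strictly increasing and tacitly takes $C=\|e^{\text{other}}\|^2>0$, so your explicit monotonicity argument and your treatment of the degenerate case $e^{\text{other}}=0$ supply exactly the steps the paper leaves implicit.
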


\begin{proof}
We compute the cosine similarity between $e(\alpha)$ and $e_r$:
\[
\cos(\theta) = \frac{\langle e(\alpha), e_r \rangle}{\|e(\alpha)\| \cdot \|e_r\|} 
= \frac{\langle \alpha e_r + (1 - \alpha)e^{\text{other}}, e_r \rangle}{\|e(\alpha)\|}
\]
Using orthogonality: $\langle e_r, e^{\text{other}} \rangle = 0$, and $\|e_r\| = 1$, we get:
\[
\cos(\theta) = \frac{\alpha}{\|e(\alpha)\|}
\]
Compute the norm:
\[
\|e(\alpha)\|^2 = \|\alpha e_r + (1 - \alpha)e^{\text{other}}\|^2 
= \alpha^2 + (1 - \alpha)^2 \|e^{\text{other}}\|^2
\]
Let $C = \|e^{\text{other}}\|^2 > 0$. Then:
\[
\cos(\theta) = \frac{\alpha}{\sqrt{\alpha^2 + (1 - \alpha)^2 C}}
\]
Now consider the function:
\[
f(\alpha) = \frac{\alpha}{\sqrt{\alpha^2 + (1 - \alpha)^2 C}}
\]
This function is strictly increasing with $\alpha \in [0, 1]$ for any constant $C > 0$. Therefore, decreasing $\alpha$ reduces the cosine similarity between $e(\alpha)$ and $e_r$.
\end{proof}

\section{Experimental Settings}

\begin{table*}[t]
\caption{Dataset Statistics: \#Entities, \# Relations, and \#Facts indicate the number of entities, relations, and facts in two TKGs. Additionally, $|\mathcal{S}|$: number of ground-truth aligned entity pairs; $|\mathcal{P}|$: number of temporal points; $\lambda_t$: proportion of temporal triples; $\lambda_{TH}$: proportion of heterogeneous temporal triples.} 
\label{tab:dataset}
\small
\begin{tabular}{ccccccccc}
    \toprule
    Dataset & \#Entities & \#Relations & \#Facts & $|\mathcal{S}|$ &$|\mathcal{P}|$ & $\lambda_{t_1}$ & $\lambda_{t_2}$ & $\lambda_{TH}$ \\
    \midrule
    DICEWS &9,517 | 9,573 & 247 | 246 & 307,552 | 307,553 & 8,566  & 4,017 & 1 & 1 & 0 \\
    YAGO-WIKI50K & 49,629 | 49,222 & 11 | 30 & 221,050 | 317,814 & 49,172 & 245 &1 &1 &0 \\
    YAGO-WIKI20K & 19,493 | 19,929 & 32 | 130 & 83,583 | 142,568 & 19,462 & 405 &.634 &.825 &0 \\
    YAGO-WIKI180K & 187,987 | 187,977 & 32 | 261 & 924,935 | 1,636,020 & 187,977 & 1,000 &.218 &.265 &.082\\
    BETA &42,666 | 42,297 &257 | 45 &199,879 | 162,320 &40,364 &967 &.645&.350&.390\footnotemark \\
    \bottomrule
\end{tabular}
\end{table*}
\footnotetext{BETA's heterogeneous portion of temporal triples is measured at the date level.}

\subsection{TEA Datasets}
\label{appendix:dataset}
In this section, we present the specifications and construction procedures of various publicly available TEA datasets, including the homogeneous datasets (DICEWS, YAGO-WIKI50K, YAGO-WIKI20K) and the more realistic and challenging heterogeneous datasets (YAGO-WIKI180K and BETA). Table \ref{tab:dataset} compares the statistics of the five TEA datasets.

\textbf{DICEWS} \cite{xu2021time}. Integrated Crisis Early Warning System (ICEWS) is a publicly available, large-scale event-based database that contains political events with specific time annotations extracted from millions of real-world news stories. ICEWS05-15 is a subset of ICEWS, consisting of 100,904 entities, 251 relations, 4,017 timestamps, and 461,329 facts from 2005 to 2015. It is commonly used as a TKG benchmark dataset in the community. DICEWS is built based on ICEWS05-15. Initially, ICEWS05-15 is randomly divided into two subsets, $\mathcal{Q}_1$ and $\mathcal{Q}_2$, of similar size, with an overlap ratio of 50\% in the number of shared quadruples between $\mathcal{Q}_1$ and $\mathcal{Q}_2$. However, this dataset includes all temporal triples and inherently ignores the temporal heterogeneity issue \cite{li2025htea}, as all non-temporal triples are excluded from the common ICEWS05-15 dataset.

\textbf{YAGO-WIKI50K} \cite{xu2021time}. Wikidata is a free and open knowledge base that stores structured data from Wikipedia. Similarly, YAGO is an open-source knowledge base extracted from Wikipedia and WordNet. Both contain a large number of identical entities represented in different surface forms, and some facts are associated with temporal information in various formats, such as timestamps and time intervals. \cite{lacroix2018canonical} constructed a large-scale TKG dataset from Wikidata, consisting of  432,715 entities, 407 relations, and 1,724 timestamps (retaining only year information) by filtering out high-frequency entities and relations. The entire dataset includes over 7 million triples in total, with around 10\% associated with temporal information. 
Based on this, YAGO-WIKI50K is built from YAGO and Wikidata, starting with the top 50,000 entities selected based on their frequencies in the dataset and linked to their corresponding Wikidata counterparts using QIDs. Two TKGs are generated by filtering out facts where entities appear in only one KG. Temporal information is then attached from the Wikidata part to the YAGO counterpart triples, and non-temporal triples are removed from the filtered TKGs. The YAGO-WIKI50K dataset is a fully temporal dataset without temporal heterogeneity, as the temporal information in the YAGO part is derived from the Wikidata counterpart triples.

\textbf{YAGO-WIKI20K} \cite{xu2022time}. YAGO-WIKI20K is constructed using a similar procedure as YAGO-WIKI50K, with the primary difference being that the number of selected Wikidata entities is reduced to 20,000 while retaining the non-temporal facts in the two TKGs. Consequently, YAGO-WIKI20K is a temporally-hybrid dataset yet without the issue of temporal heterogeneity.

\begin{table*}
\caption{TEA results on DICEWS and WY50K. The best results are highlighted in bold. Underline indicates the second-best results.}
\label{tab:homo result}
\centering
\small
\begin{tabular}{c ccc ccc ccc ccc}
\toprule
\multicolumn{1}{c}{} & \multicolumn{3}{c}{\textbf{DICEWS (1K)}} & \multicolumn{3}{c}{\textbf{DICEWS (200)}} & \multicolumn{3}{c}{\textbf{YAGO-WIKI50K (5K)}} & \multicolumn{3}{c}{\textbf{YAGO-WIKI50K (1K)}}\\
\cmidrule(rl){2-4} \cmidrule(rl){5-7} \cmidrule(rl){8-10} \cmidrule(rl){11-13}
\textbf{MODEL} & {MRR} & {H@1} & {H@10} & {MRR} & {H@1} & {H@10} & {MRR} & {H@1} & {H@10} & {MRR} & {H@1} & {H@10}\\
\midrule
Dual-AMN &.779 &.716 &.893 &.733 &.668 &.854 &.922 &.897 &.964 &.834 &.755 &.890\\
LightEA &.833 &.785 &.918 &.779 &.721 &.878 &.960 &.948 &.979 &.902 &.878 &.945\\
\midrule
STEA &.941 &.928 &.960 &.941 &.927 &.961 &.954 &.935 &.986 &.916 &.887 &.966\\
TEA-GNN &.911 &.887 &.947 &.902 &.876 &.941 &.909 &.879 &.961 &.775 &.723 &.871\\
TREA &.933 &.914 &.966 &.927 &910 &960 &.958 &.940 &.989 &.885 &.840 &.937\\
LightTEA &.959 &.952 &.970 &.955 &.949 &.966 &\textbf{.990} &\textbf{.986} &\textbf{.997} &\underline{.977} &\underline{.969} &\underline{.989}\\
DualMatch &\textbf{.961} &\underline{.953} &\textbf{.973} &\textbf{.961} &\textbf{.953} &\textbf{.974} &.986 &.981 &\underline{.996} &.961 &.947 &.984 \\
MGTEA & - & - & - & \underline{.960} & \underline{.951} & \textbf{.974} & - & - & - & .960 & .947 & .982 \\
HTEA &.943 &.932 &.959 &.928 &.914 &.949 &.979 &.971 &.991 &.950 &.931 &.980\\
\textbf{\modelname} &.955 &.946 &.970 &.947 &.936 &.964 &\underline{.987}
&\underline{.982} &.995 &.968 &.957 &.987 \\
\midrule
\textbf{\modelname$^+$(iter)} &\textbf{.961} &\textbf{.954} &\textbf{.973} &.958 &.950 &\underline{.970} &\underline{.987} &.981 &\underline{.996} &\textbf{.979} &\textbf{.971} &\textbf{.992} \\
\bottomrule
\vspace{-6mm}
\end{tabular}
\end{table*}

\textbf{YAGO-WIKI180K} \cite{li2025htea}. The YAGO-WIKI180K dataset is constructed using the same Wikidata source as the YAGO-WIKI50K and YAGO-WIKI20K datasets, with a key distinction in how temporal information is handled. The YAGO triples are extracted from YAGO3 \cite{suchanek2007yago}. All available YAGO triples are first retrieved, followed by the extraction of temporal information from associated file records that contain timestamp details for each fact. 
This procedure enables the temporal information to be derived directly from the YAGO portion, rather than incorporating temporal data from the WIKI counterpart. Using available QID links, the YAGO and WIKI TKGs are aligned by selecting triples in which YAGO entities have corresponding counterparts in the WIKI graph. The resulting dataset includes approximately 19,000 shared entities, with 32 and 261 relations on the YAGO and WIKI sides, respectively, and includes a partial set of temporal triples. Compared to other existing TEA datasets, this construction process yields nearly four times as many entities. Furthermore, due to the independent sourcing of temporal information from the two KGs, the dataset exhibits temporal heterogeneity in 8.2\% of the aligned triples. This level of heterogeneity more accurately reflects the variation commonly encountered in real-world settings.

\textbf{BETA} \cite{zeng2024benchmarking}. Unlike all the other datasets which standardizes timestamps to year-level granularity, BETA retains multi-granular temporal information, with $28\%$ of temporal facts in the Wikidata subset and $14\%$ in the YAGO subset containing precise month-day annotations. This fine-grained temporal data introduces more diverse information in alignment tasks. In terms of entity distribution, BETA exhibits a more realistic spread: while most entities in YAGO-WIKI50K are associated with a single dominant temporal relation (e.g., member\_of\_sports\_team), BETA contains entities with no temporal facts as well as those linked to multiple temporal relations similar to YAGO-WIKI180K. This distribution supports the design of new alignment scenarios beyond the original single-relation setting. Additionally, relation frequency in BETA is more balanced: whereas member\_of\_sports\_team and plays\_for account for over $98\%$ of all quadruples in YAGO-WIKI50K, the top five relations in BETA collectively contribute to less than $35\%$, reflecting a broader and more domain-general knowledge structure. These characteristics make BETA a more challenging and representative benchmark for the TEA task.

\section{Experimental Results}

\subsection{Results on Homogeneous TEA Datasets}
\label{appendix:homogeneous}
In this section, we evaluate the performance of our model, \modelname, on two widely used homogeneous TEA datasets: DICEWS and YAGO-WIKI50K, with certain number of seed alignments. As shown in Table~\ref{tab:homo result}, \modelname\ achieves competitive performance in these relatively easy datasets, ranking among the top two models across various experimental settings.  Notably, both the iterative and base versions of \modelname\ exhibit strong representational capacity, even without relying on an explicit time-consuing temporal encoder, which significantly enhances model efficiency~\cite{mao2022lightea, liu2023unsupervised}.

\subsection{Richness-stratified Analysis}

In this section, we present an experiment to evaluate the effectiveness of our proposed richness measurement. Specifically, we compare the similarity between type-specific entity embeddings and their corresponding type-level reference embeddings. For each feature type, entities are divided into three groups—low, medium, and high richness—based on their respective richness scores. In particular, we define three richness levels: low, medium, and high, according to the following bins:
\begin{itemize}
    \item \textbf{Entity (E):} $[0,3)$, $[3,8)$, $[8,+\infty)$
    \item \textbf{Relation (R):} $[0,2)$, $[2,4)$, $[4,+\infty)$
    \item \textbf{Temporal point (T):} $[0,1)$, $[1,5)$, $[5,+\infty)$
    \item \textbf{Temporal interval (I)} $[0,3)$, $[3,5)$, $[5,+\infty)$
\end{itemize}

\begin{table}
\caption{Grouped similarity of feature-specific embedding and reference embedding in YAGO-WIKI180K (Metric: Average cosine similarity).}
 \vspace{-2mm}
\label{tab:grouped cos sim}
\centering
\footnotesize
\begin{tabular}{c c c c}
\toprule
    \textbf{Feature} & \textbf{High-rich} & \textbf{Med-rich} & \textbf{Low-rich} \\
    \midrule
    Entity &-0.0240 &-0.0137 &0.0737 \\
    Relation &-0.2080 &-0.0424 &-0.0376 \\
    Temporal point &-0.0452 &0.0298 &0.4373 \\
    Temporal interval &0.0202 &0.0358 &0.2862 \\
    \bottomrule
\end{tabular}
 \vspace{-2mm}
\end{table}

\begin{table}
\caption{Edge weight after richness-guided attention.}
\vspace{-2mm}
\label{tab:richness distance}
\tiny
\centering
\footnotesize
\begin{tabular}{c c c}
\toprule
    \textbf{Entity} & \textbf{Neighbour Size} & \textbf{Attention} \\
    \midrule
    \scriptsize{Mexico\_national\_football\_team} &307 &0.3457 \\
    \scriptsize{Toros\_Neza} & 50 &0.3348 \\
    \scriptsize{Mexico\_national\_under-17} & 33 & 0.3195 \\
    \bottomrule
\end{tabular}
 \vspace{-2mm}
\end{table}

Table~\ref{tab:grouped cos sim} reveals that entities with higher feature richness tend to exhibit smaller similarity compared to the reference embedding, whereas feature-sparse entities remain larger. This observation supports our hypothesis that feature-rich entities possess more diverse and distinctive embedding represent than the typical one. Furthermore, the proposed similarity metric proves to be effective in quantifying the potential information richness of feature-specific embeddings.

\subsection{Case Study for Richness-guided Attention}
In this section, we focus on the richness-guided attention mechanism, specifically in the context of the entity feature. As a case study, we examine the entity \textit{Éver\_Guzmán} from the YAGO TKG, which corresponds to index 81890.

Details of this entity's associated triples are illustrated in Figure~\ref{fig:case study richness}, where it is shown to have three neighbors. As presented in Table~\ref{tab:richness distance}, our attention mechanism successfully captures the nuanced differences among the neighboring entities, with the assigned attention weights being approximately proportional to the number of neighbors associated with each.

\subsection{Case Study for Adaptive Weighting}
In this section, we analyze the dynamic weighting behavior of structural and temporal features during training in YAGO-WIKI180K. We focus on a representative example—entity index 2000—along with its ground-truth counterpart and the average Relation feature weights across all entities from iteration 30 to 39. As shown in Figure~\ref{fig:dynamic-weighting study} (a), relation features consistently receive slightly higher weights than temporal features, reflecting their richer and less noisy characteristics. Although the weights fluctuate across iterations, they remain centered around 0.5, indicating the model’s ability to adaptively balance feature contributions. Figure~\ref{fig:dynamic-weighting study} (b) shows the weight difference between the selected entity and its counterpart, with similar trends suggesting the model captures consistent patterns that aid accurate alignment. This study highlights the effectiveness of the dynamic weighting strategy throughout the training process.

\begin{figure}
    \centering
    \includegraphics[width=\linewidth]{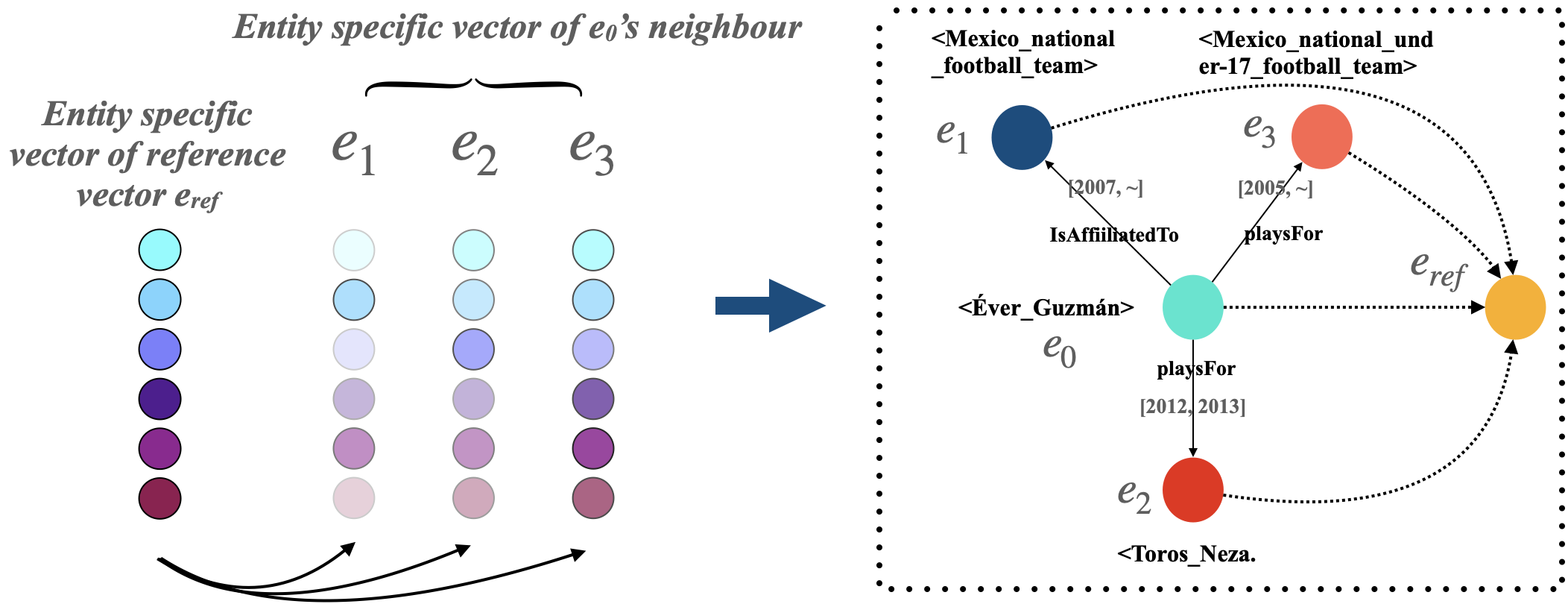}
    \caption{The triple details for entity \textit{Éver\_Guzmán}.}
    \label{fig:case study richness}
\end{figure}

\begin{figure}
    \centering
    \includegraphics[width=\linewidth]{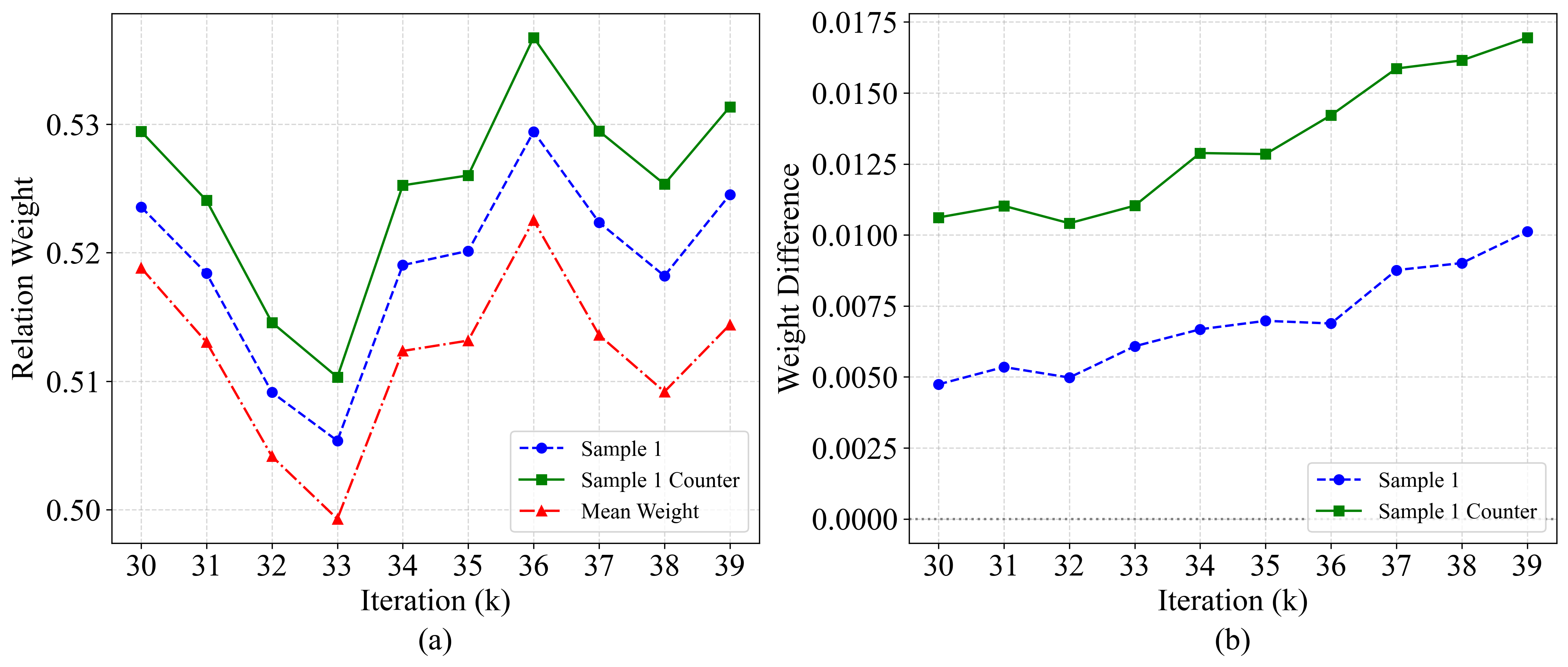}
    \caption{(a) Relation weighting for selected entity vs. Iteration. (b) Average weighting difference vs. Iteration.}
    \label{fig:dynamic-weighting study}
   \vspace{-4mm}
\end{figure}

\begin{figure}
   \centering
   \includegraphics[width=\linewidth]{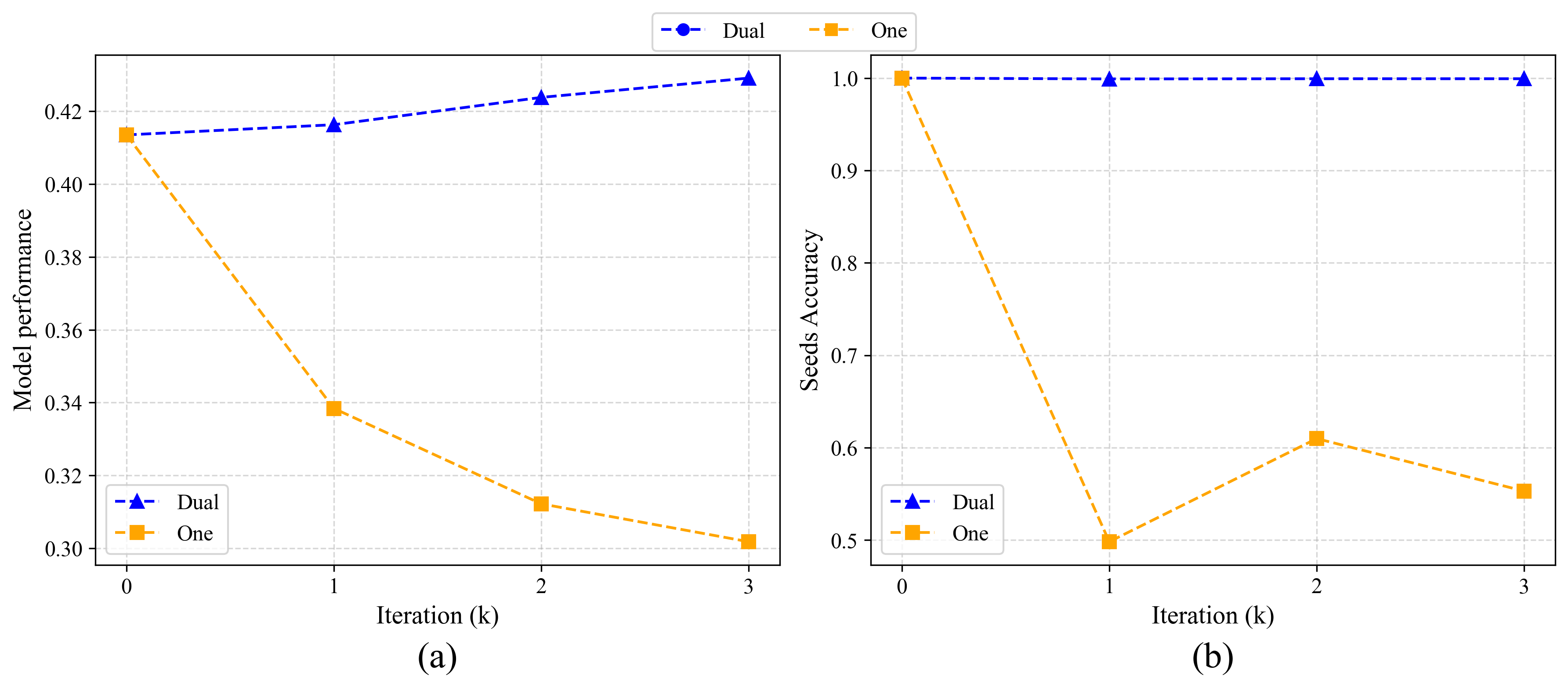}
   \caption{(a) Model Performance vs. Iteration, illustrating the impact of iterations on overall model performance. (b) Seed Accuracy vs. Iteration, showing how iterations affect overall seed accuracy.}
   \label{fig:seed}
   \vspace{-4mm}
\end{figure}

\subsection{Seed Accuracy Analysis}
We investigate the effectiveness of dual-aspect bi-directional seed accuracy in this section. Specifically, we compare the performance of using $\mathcal{M}_{stru}$ and $\mathcal{M}_{temp}$ for seed refinement (Dual) versus only using $\mathcal{M}_{mix}$ for seed refinement in YAGO-WIKI180K. As shown in Figure \ref{fig:seed}(a), the overall performance consistently improves with increasing iterations under dual-aspect refinement. In contrast, using only the $\mathcal{M}_{mix}$ model for seed refinement leads to a significant decline in performance. This trend is also evident in Figure \ref{fig:seed}(b), where the seed accuracy remains nearly perfect for the dual-view split-model approach but deteriorates substantially when relying solely on the $\mathcal{M}_{mix}$ model. These results highlight the critical role of seed quality in determining overall model performance.



\end{document}